\documentclass[sigplan, screen, pdfa]{acmart}\settopmatter{printacmref=false, printfolios=false}

\copyrightyear{2024}
\acmYear{2024}
\acmConference[PPoPP '24]{The 29th ACM SIGPLAN Annual Symposium on Principles and Practice of Parallel Programming}{March 2--6, 2024}{Edinburgh, United Kingdom}
\acmBooktitle{The 29th ACM SIGPLAN Annual Symposium on Principles and Practice of Parallel Programming (PPoPP '24), March 2--6, 2024, Edinburgh, United Kingdom}\acmDOI{10.1145/3627535.3638492}
\acmISBN{}
\startPage{1}

\setcopyright{none}



\usepackage{microtype}
\usepackage[subtle]{savetrees}
\usepackage[caption=false]{subfig}
\usepackage{xspace}
\usepackage{amsmath}
\usepackage{amsthm}
\usepackage{amstext}
\usepackage{amsfonts}
\usepackage{tikz}
\usepackage{cleveref}
\usepackage{pgfplots}
\pgfplotsset{compat=1.9}
\usepgfplotslibrary{groupplots}
\usepackage{algorithm}
\usepackage{cases}
\usepackage{algpseudocode}
\usepackage{wrapfig}
\usepackage{url}
\usepackage{enumerate}
\usepackage{enumitem}
\usepackage[normalem]{ulem}
\usepackage{relate}
\usepackage{balance}
\usepackage{proc}
\usepackage{float}
\usepackage{flafter}
\usepackage{stfloats}
\usepackage{makecell}
\usepackage{listings}
\usepackage{placeins}
\usepackage[font=normal]{caption}
\pgfplotsset{
    discard if/.style 2 args={
        x filter/.append code={
            \edef\tempa{\thisrow{#1}}
            \edef\tempb{#2}
            \ifx\tempa\tempb
                
            \fi
        }
    },
    discard if not/.style 2 args={
        x filter/.append code={
            \edef\tempa{\thisrow{#1}}
            \edef\tempb{#2}
            \ifx\tempa\tempb
            \else
                
            \fi
        }
    },
}

\pgfplotsset{compat=1.8,
    /pgfplots/ybar legend/.style={
    /pgfplots/legend image code/.code={%
       \draw[##1,/tikz/.cd,yshift=-0.25em]
        (0cm,0cm) rectangle (3pt,0.8em);},
   },
}

\definecolor{mygreen}{rgb}{0,0.6,0}
\definecolor{mygray}{rgb}{0.5,0.5,0.5}
\definecolor{mymauve}{rgb}{0.58,0,0.82}

\lstset{ %
  backgroundcolor=\color{white},   
  basicstyle=\footnotesize,        
  breaklines=true,                 
  captionpos=b,                    
  commentstyle=\color{mygreen},    
  escapeinside={\%*}{*)},          
  keywordstyle=\color{blue},       
  stringstyle=\color{mymauve},     
  basicstyle=\ttfamily\small,
}

\lstset{emph={parallel_for,spawn,sync},emphstyle={\color{purple}\bfseries}}


\definecolor{safe-black}{RGB}{0,0,0}
\definecolor{safe-olive}{RGB}{0,73,73}
\definecolor{safe-teal}{RGB}{0,146,146}
\definecolor{safe-pink}{RGB}{255,109,182}
\definecolor{safe-peach}{RGB}{255,160,110}
\definecolor{safe-plum}{RGB}{73,0,146}
\definecolor{safe-cerulean}{RGB}{0,109,219}
\definecolor{safe-lavender}{RGB}{182,109,255}
\definecolor{safe-sky}{RGB}{109,182,255}
\definecolor{safe-baby}{RGB}{182,219,255}
\definecolor{safe-brick}{RGB}{146,0,0}
\definecolor{safe-brown}{RGB}{146,73,0}
\definecolor{safe-orange}{RGB}{219,209,0}
\definecolor{safe-green}{RGB}{36,255,36}
\definecolor{safe-yellow}{RGB}{255,255,109}

\setlength{\textfloatsep}{0.1cm}

\renewcommand{\epsilon}{\varepsilon}

\newcommand{\defn}[1]{\textit{#1}}
\newcommand{\system}{F-Graph\xspace}

\newcommand{\cpac}{C-PaC\xspace}
\newcommand{\othersystem}{\cpac}
\newcommand{\pactrees}{PaC-trees\xspace}
\newcommand{\ptree}{P-tree\xspace}
\newcommand{\ptrees}{P-trees\xspace}
\newtheorem{theorem}{Theorem}
\newtheorem{lemma}[theorem]{Lemma}

\definecolor{magenta4}{rgb}{0.5625,0,0.5625}
\definecolor{green4}{rgb}{0,0.5625,0}
\definecolor{orange4}{rgb}{0.98,0.31,0.09}

\newcommand{\todo}[1]{{\textcolor{red}{TODO: {#1}}}}

\newcommand{\figref}[1]         {Figure~\ref{fig:#1}}

\newcommand{\tabref}[1]        {Table~\ref{tab:#1}}
\newcommand{\secref}[1]         {Section~\ref{sec:#1}}
\newcommand{\lemref}[1]         {Lemma~\ref{lem:#1}}

\newcommand{\secreftwo}[2]      {Sections \ref{sec:#1} and~\ref{sec:#2}}
\newcommand{\secrefthree}[3]      {Sections \ref{sec:#1}, \ref{sec:#2}  and~\ref{sec:#3}}
\newcommand{\figreftwo}[2]      {Figures \ref{fig:#1} and~\ref{fig:#2}}

\newcommand{\rcircle}{\textsuperscript{\textregistered}\xspace}
\newcommand{\numcores}{\ensuremath{64}\xspace}
\newcommand{\numthreads}{\ensuremath{128}\xspace}

\newcommand{\rmat}{RMAT\xspace}

\newcommand{\minsmallrangespeedup}{$4\times$\xspace}
\newcommand{\rangespeedupcpac}{$1.2\times-10\times$\xspace}
\newcommand{\rangespeedupcpacavg}{$4\times$\xspace} 
\newcommand{\batchspeedup}{$3\times$\xspace}
\newcommand{\batchspeedupoverpam}{$1.5\times$\xspace} 
\newcommand{\rangespeedupoverpam}{$8.9\times-27.4\times$\xspace} 
\newcommand{\rangespeedupoverpamavg}{$20\times$\xspace} 

\newcommand{\graphalgspeedupcpac}{$1.2\times$\xspace}
\newcommand{\graphbatchspeedupcpac}{$2\times$\xspace}
\newcommand{\graphalgspeedupaspen}{$1.3\times$\xspace}
\newcommand{\graphbatchspeedupaspen}{$2\times$\xspace}
\newcommand{\spacesavingsoveraspen}{$0.6\times$\xspace}

\newcommand{\proc}{\texttt}

\newcommand{\cpp}{\texttt{C++}\xspace}
\newcommand{\searchsimple}{\proc{search(x)}\xspace}
\newcommand{\insertsimple}{\proc{insert(x)}\xspace}
\newcommand{\deletesimple}{\proc{delete(x)}\xspace}
\newcommand{\rangemap}{\proc{range\_map}\xspace}

\newcommand{\batchwork}{$O(k((\log^2 (n))/B + \log(n)))$}
\newcommand{\batchspan}{$O(\log(k) + \log^2(n))$}

\makeatletter
\algrenewcommand\ALG@beginalgorithmic{\footnotesize}
\makeatother

\renewcommand{\defn}[1]       {{\textit{\textbf{\boldmath #1}}}}
\renewcommand{\paragraph}[1]{\vspace{0.09in}\noindent{\bf \boldmath #1.}}

\algnewcommand{\algorithmicor}{\textbf{ or }}
\algnewcommand{\OR}{\algorithmicor}

\newcommand{\mymarginpar}[1]{}

\pgfplotsset{
    discard if/.style 2 args={
            x filter/.append code={
                    \edef\tempa{\thisrow{#1}}
                    \edef\tempb{#2}
                    \ifx\tempa\tempb
                        
                    \fi
                }
        },
    discard if not/.style 2 args={
            x filter/.append code={
                    \edef\tempa{\thisrow{#1}}
                    \edef\tempb{#2}
                    \ifx\tempa\tempb
                    \else
                        
                    \fi
                }
        },
}

\pgfplotsset{compat=1.11,
    /pgfplots/ybar legend/.style={
            /pgfplots/legend image code/.code={%
                    \draw[##1,/tikz/.cd,yshift=-0.25em]
                    (0cm,0cm) rectangle (3pt,0.8em);},
        },
}
\def\mystrut{\vphantom{hg}}
\pgfplotsset{
    legend style={font=\mystrut},
    y tick label style={font=\mystrut},
    x tick label style={font=\mystrut}
}

\begin{document}

\title{CPMA: An Efficient Batch-Parallel Compressed Set Without Pointers}         


\author{Brian Wheatman}
\affiliation{%
  \institution{Johns Hopkins University}
}
\email{wheatman@cs.jhu.edu}
\author{Randal Burns}
\affiliation{%
  \institution{Johns Hopkins University}
}
\email{randal@cs.jhu.edu}

\author{Ayd\i n Bulu\c{c}}
\affiliation{\institution{Lawrence Berkeley National Laboratory}}
\email{abuluc@lbl.gov}

\author{Helen Xu} \thanks{Work done while Helen Xu was at Lawrence Berkeley
  National Laboratory.}  \affiliation{%
  \institution{Georgia Institute of Technology}
}
\email{hxu615@gatech.edu}

\begin{abstract}
  This paper introduces the \defn{batch-parallel Compressed \\ Packed Memory Array}
  (CPMA), a compressed, dynamic, ordered set data structure based on the Packed
  Memory Array (PMA). 
  Traditionally,
  batch-parallel sets are built on pointer-based data structures such as trees
  because pointer-based structures enable fast parallel unions via pointer manipulation. When
  compared with cache-optimized trees, PMAs were slower to update but faster to
  scan.

  The batch-parallel CPMA overcomes this tradeoff between updates and scans by
  optimizing for cache-friendliness. On average, the CPMA achieves \batchspeedup faster batch-insert throughput and \rangespeedupcpacavg faster range-query throughput compared with 
  compressed \pactrees, a state-of-the-art batch-parallel set library based on
  cache-optimized trees.

  We further evaluate the CPMA compared with compressed \pactrees and Aspen, a state-of-the-art system, on a real-world
  application of dynamic-graph processing.  The CPMA is on average
  \graphalgspeedupcpac faster on a suite of graph
  algorithms and \graphbatchspeedupcpac faster on batch inserts when compared with compressed PaC-trees. Furthermore, the CPMA is on average \graphalgspeedupaspen faster on graph algorithms and \graphbatchspeedupaspen faster on batch inserts compared with Aspen.

\end{abstract}

\begin{CCSXML}
  <ccs2012>
  <concept>
  <concept_id>10003752.10003809.10010170.10010171</concept_id>
  <concept_desc>Theory of computation~Shared memory algorithms</concept_desc>
  <concept_significance>500</concept_significance>
  </concept>
  <concept>
  <concept_id>10003752.10003809.10010031.10002975</concept_id>
  <concept_desc>Theory of computation~Data compression</concept_desc>
  <concept_significance>500</concept_significance>
  </concept>
  </ccs2012>
\end{CCSXML}

\ccsdesc[500]{Theory of computation~Shared memory algorithms}
\ccsdesc[500]{Theory of computation~Data compression}

\keywords{packed memory array, batch-parallel, compression, data structures,
  dynamic graphs} 

\maketitle

\section{Introduction}\label{sec:intro}

The \defn{dynamic ordered set} data type (also called a key store) is one of the
most fundamental collection types and appears in many programming languages as
either a built-in basic type or in standard
libraries~\cite{BlelFeSu16,sun2018pam, DhulBlGu22}. Ordered sets enable
efficient scan-based operations (i.e., operations that use ordered iteration)
such as range queries and maps. This paper focuses on dynamic ordered sets which
also support updates (i.e., inserts and deletes).

Due to their role in large-scale data processing, dynamic ordered sets have been
targeted for efficient batch-parallel
implementations~\cite{frias2007parallelization,barbuzzi2010parallel, DhulBlGu22,
  DhulipalaBlSh19, sun2018pam, erb2014parallel, TsengDhBl19}.  Since point
operations (e.g., single-element insertion) are often not worth parallelizing
due to their sublinear complexity, modern libraries parallelize \defn{batch
  updates} that insert or delete multiple elements.  Direct support for batch
updates simplifies update parallelism and reduces the overall work of updates by
sharing work between updates.

Existing set\footnote{In this paper, we use ``sets'' to refer to dynamic
ordered batch-parallel sets.} implementations demonstrate the importance of optimizing for the
memory subsystem to achieve high performance.  Almost all fast
batch-parallel set implementations are built on pointer-based data structures
(e.g., trees)~\cite{DhulBlGu22, DhulipalaBlSh19, BlelFeSu16,
  frias2007parallelization,barbuzzi2010parallel, sun2018pam, erb2014parallel,
  TsengDhBl19}. Unfortunately, the main bottleneck in the scalability of these
sets is memory bandwidth limitations due to pointer chasing~\cite{BlelFeSu16,
  frias2007parallelization}.  Dhulipala \emph{et al.}~\cite{DhulipalaBlSh19,
  DhulBlGu22} mitigated these issues in trees by improving spatial locality via
blocking and compression.

Even with these improvements, cache-optimized trees inherently leave performance on the table because the
random memory accesses from following pointers are slower than contiguous memory
accesses~\cite{BenderCoFa19, PandeyWhXu21, WheatmanXu21}. In theory,
cache-friendly trees such as B-trees~\cite{BayerMc72} are asymptotically optimal
in the classical external-memory model~\cite{AggarwalVi88} for both updates and
scans. Empirically, array-based data structures support scans over
$2\times$ faster than tree-based data structures due to prefetching and the cost
of pointer chasing~\cite{PandeyWhXu21,WheatmanXu23}.

\paragraph{Exploiting sequential access with PMAs}
This paper introduces a work-efficient \defn{batch-parallel Compressed \\Packed
  Memory Array} (CPMA) based on the Packed Memory Array
(PMA)~\cite{itai1981sparse, BendDeFa00,bender2005cache}, a dynamic array-based
data structure optimized for cache-friendliness (i.e., spatial locality).
The PMA appears in domains such as graph
processing~\cite{ShaLiHe17,wheatman2018packed, DeLeoBo19, WheatmanXu21,
  de2021teseo,PandeyWhXu21, wheatman2021streaming}, particle
simulations~\cite{durand2012packed}, and computer
graphics~\cite{toss2018packed}.

Existing PMAs suffer from low update throughput compared to
batch-parallel trees because they lack direct algorithmic support for parallel batch
updates~\cite{WheatmanXu21}. At a high level, batch-update algorithms can be
implemented with unions/differences~\cite{BlelFeSu16}. Previous
work~\cite{de2019packed} introduced a serial batch-update algorithm for PMAs
based on local merges 
but stopped short of parallelization.

Supporting theoretically and practically
efficient parallel unions in a PMA requires novel algorithmic development
because existing parallel batch-update algorithms rely heavily on pointer
adjustments, which do not easily translate to contiguous memory layouts.

As an additional optimization, this paper adds
compression to PMAs. Previous work on compressed blocked trees~\cite{DhulBlGu22,
  DhulipalaBlSh19} demonstrates the potential for compression to alleviate
memory bandwidth limitations by reducing the number of bytes transferred. This
paper applies the same techniques to PMAs.

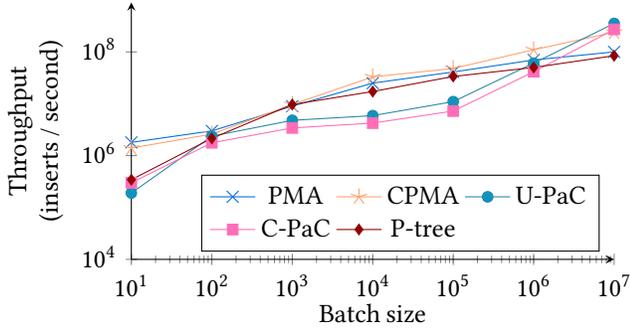
\begin{figure}
  \centering
  \begin{tikzpicture}
    \begin{axis}[
        width=8cm, height=5cm,
        axis lines = left,
        xlabel = Batch size,
        ylabel style={align=center},
        ylabel = {Throughput\\ (inserts / second)},
        xmode=log,
        ymode=log,
        ymin=10000,
        ymax=900000000,
        cycle list name=exotic,
        legend pos=south east,
        legend columns=3,
        xlabel shift={-6pt},
      ]

      \addplot[mark=x, safe-cerulean,mark options={scale=1.6}]
      coordinates{(10,1.8E+6)(100,3.0E+6)(1000,9.0E+6)(10000,2.5E+7)(100000,4.1E+7)(1000000,7.0E+7)(10000000,1.0E+8)};
      \addlegendentry{PMA}

      \addplot[mark=star, safe-peach,mark options={scale=1.6}]
      coordinates{(10,1.4E+6)(100,2.6E+6)(1000,9.7E+6)(10000,3.3E+7)(100000,4.8E+7)(1000000,1.1E+8)(10000000,2.4E+8)};
      \addlegendentry{CPMA}

      \addplot
      coordinates{(10,190187.585)(100,2370290.813)(1000,4798069.18)(10000,5902969.583)(100000,10996690.55)(1000000,60722279.37)(10000000,354021149.2)};
      \addlegendentry{U-PaC}

      \addplot[safe-pink, mark=square*]
      coordinates{(10,299300.3513)(100,1768338.341)(1000,3419497.606)(10000,4239253.64)(100000,7237729.822)(1000000,41433825.83)(10000000,271678593.4)};
      \addlegendentry{C-PaC}

     \addplot[safe-brick, mark=diamond*]
      coordinates{(10,343510.0947)(100,2153225.74)(1000,9688088.127)(10000,17223986.05)(100000,33616959.62)(1000000,50184578.88)(10000000,84501353.29)};
      \addlegendentry{\ptree}

    \end{axis}
  \end{tikzpicture}
  \vspace{-.8cm}
  \caption{Insert throughput as a function of batch size.}
  \label{fig:batch-micro}
\end{figure}

\begin{figure}
  \centering
  \begin{tikzpicture}
    \begin{axis}[
        width=8cm, height=5cm,
        axis lines = left,
        xlabel = Expected elements processed per search,
        ylabel style={align=center},
        ylabel = {Throughput \\(elements / second)},
        xmode=log,
        ymode=log,
        ymin=10000000,
        ymax=50000000000,
        cycle list name=exotic,
        legend pos=south east,
        legend columns=3,
        xlabel shift={-6pt},
      ]





      \addplot[mark=x, safe-cerulean,mark options={scale=1.6}]
      coordinates{(1.490116119,4.5E+8)(2.980232239,8.7E+8)(5.960464478,1.7E+9)(11.92092896,3.2E+9)(23.84185791,5.1E+9)(47.68371582,6.6E+9)(95.36743164,8.8E+9)(190.7348633,8.3E+9)(381.4697266,1.3E+10)(762.9394531,1.5E+10)(1525.878906,1.5E+10)(3051.757813,1.6E+10)(6103.515625,1.6E+10)(12207.03125,1.7E+10)(24414.0625,1.7E+10)(48828.125,1.8E+10)(97656.25,1.8E+10)(195312.5,1.8E+10)(390625,1.9E+10)(781250,1.9E+10)(1562500,1.9E+10)};
      \addlegendentry{PMA}

      \addplot[mark=star, safe-peach,mark options={scale=1.6}]
      coordinates{(1.490116119,2.0E+8)(2.980232239,4.1E+8)(5.960464478,8.1E+8)(11.92092896,1.6E+9)(23.84185791,2.9E+9)(47.68371582,5.1E+9)(95.36743164,8.2E+9)(190.7348633,1.2E+10)(381.4697266,1.5E+10)(762.9394531,1.8E+10)(1525.878906,2.0E+10)(3051.757813,2.2E+10)(6103.515625,2.3E+10)(12207.03125,2.3E+10)(24414.0625,2.4E+10)(48828.125,2.4E+10)(97656.25,2.4E+10)(195312.5,2.4E+10)(390625,2.4E+10)(781250,2.4E+10)(1562500,2.4E+10)};
      \addlegendentry{CPMA}

      \addplot
      coordinates{(1.490116119,5.0E+7)(2.980232239,1.0E+8)(5.960464478,2.0E+8)(11.92092896,3.7E+8)(23.84185791,6.4E+8)(47.68371582,9.5E+8)(95.36743164,1.2E+9)(190.7348633,1.2E+9)(381.4697266,2.1E+9)(762.9394531,4.1E+9)(1525.878906,7.0E+9)(3051.757813,1.0E+10)(6103.515625,1.3E+10)(12207.03125,1.6E+10)(24414.0625,1.6E+10)(48828.125,1.7E+10)(97656.25,1.8E+10)(195312.5,1.8E+10)(390625,1.9E+10)(781250,1.9E+10)(1562500,1.9E+10)};
      \addlegendentry{U-PaC}

      \addplot[safe-pink, mark=square*]
      coordinates{(1.490116119,4.7E+7)(2.980232239,1.0E+8)(5.960464478,1.8E+8)(11.92092896,3.5E+8)(23.84185791,5.8E+8)(47.68371582,8.5E+8)(95.36743164,1.0E+9)(190.7348633,1.2E+9)(381.4697266,1.5E+9)(762.9394531,2.3E+9)(1525.878906,4.1E+9)(3051.757813,7.0E+9)(6103.515625,1.0E+10)(12207.03125,1.3E+10)(24414.0625,1.6E+10)(48828.125,1.7E+10)(97656.25,1.8E+10)(195312.5,1.9E+10)(390625,1.9E+10)(781250,1.9E+10)(1562500,1.9E+10)};
      \addlegendentry{C-PaC}

      \addplot[safe-brick, mark=diamond*]
      coordinates{(1.490116119,5.9E+7)(2.980232239,1.1E+8)(5.960464478,1.9E+8)(11.92092896,2.9E+8)(23.84185791,4.0E+8)(47.68371582,4.9E+8)(95.36743164,5.5E+8)(190.7348633,5.8E+8)(381.4697266,6.0E+8)(762.9394531,6.1E+8)(1525.878906,6.2E+8)(3051.757813,6.2E+8)(6103.515625,6.3E+8)(12207.03125,6.4E+8)(24414.0625,6.5E+8)(48828.125,6.6E+8)(97656.25,6.7E+8)(195312.5,6.8E+8)(390625,6.9E+8)(781250,6.9E+8)(1562500,6.9E+8)};
      \addlegendentry{\ptree}


    \end{axis}
  \end{tikzpicture}
  \vspace{-.8cm}
  \caption{Range query throughput as a function of range size.}
  \label{fig:map-range-micro}
   \vspace{.4cm}
\end{figure}
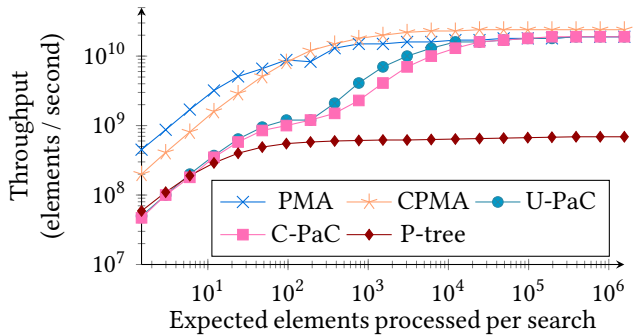

\paragraph{Results summary}
The CPMA's cache-friendliness translates into performance: the CPMA overcomes
the traditional tradeoff between updates and queries in trees and
PMAs.~\figreftwo{batch-micro}{map-range-micro} demonstrate that the CPMA
achieves on average \batchspeedup faster batch-insert throughput and
\rangespeedupcpacavg faster range-query throughput compared to Parallel Compressed trees (\pactrees)~\cite{DhulBlGu22}. PaC-trees\footnote{\pactrees are implemented in a
  library called CPAM, but we use ``\pactrees'' and ``\cpac'' in this paper to
  avoid confusion with ``CPMA.'' Similarly \ptrees are implemented in a library
  called PAM.} are a state-of-the-art batch-parallel set implementation based on
cache-optimized blocked trees. We also found that the uncompressed PMA achieves
on average \batchspeedupoverpam faster batch-insert throughput and
\rangespeedupoverpamavg faster range-query throughput when compared to
\ptrees~\cite{sun2018pam} (PAM), an efficient batch-parallel set implementation
based on binary trees. We compare the PMA with \ptrees because they are both
uncompressed.  Finally, CPMAs use similar space to compressed \pactrees but at least
$2\times$ less space than uncompressed PMAs. Furthermore, PMAs use about
$3\times$ less space than \ptrees.

To understand the improved locality of the PMA compared to \pactrees, we
measured\footnote{We added 100 million elements serially in batches of 1 million
  and measured the cache misses with \texttt{perf stat}.} the number of cache
misses during batch inserts in both. The PMA incurs at least $3\times$ fewer
cache misses when compared to \pactrees because the PMA takes advantage of
contiguous memory access as can be seen in \tabref{intro-cache-misses}.

Furthermore, to demonstrate the applicability of the CPMA, we built
\system\footnote{\system uses only one compressed PMA (a flat array) to store
  the graph. The F in \system comes from the musical key of F, which has one
  flat.}, a dynamic-graph-processing system built on the CPMA because PMAs have
been used extensively in graph processing~\cite{ShaLiHe17, wheatman2018packed,
  DeLeoBo19, WheatmanXu21, de2021teseo,PandeyWhXu21,
  wheatman2021streaming}. \system is on average \graphalgspeedupcpac faster on a
suite of graph algorithms and \graphbatchspeedupcpac faster on batch updates
compared to \cpac, a dynamic-graph-processing framework based on compressed
\pactrees. \system uses marginally less space to store the graphs when compared
to \cpac.  We also evaluate Aspen~\cite{DhulipalaBlSh19}, a state-of-the-art
dynamic-graph-processing framework based on compressed blocked trees. We find
that \system is on average \graphalgspeedupaspen faster on graph algorithms,
\graphbatchspeedupaspen faster on batch updates, and uses about
\spacesavingsoveraspen the space when compared to Aspen.

\begin{table}[t]
  \begin{center}
    \begin{tabular}{@{}ccccc@{}}
      \hline
      \textit{Workload}         & \textit{U-PaC}~\cite{DhulBlGu22} &
     \textit{C-PaC}~\cite{DhulBlGu22} & \textit{\textbf{PMA}}  &
      \textit{\textbf{CPMA}}
      \\
      \hline
 L1 misses & 3.1E9 & 2.2E9 & 8.9E8 & 7.0E8 \\
 L3 misses & 3.1E8 & 7.6E7 & 9.6E7 & 1.1E7 \\
      \hline
    \end{tabular}
    \caption{Cache misses incurred during batch inserts. 
      }\vspace{-.5cm}
    \label{tab:intro-cache-misses}
  \end{center}
\end{table}

\subsection*{Contributions}
\begin{itemize}
\item The design and analysis of a theoretically efficient parallel batch-update
  algorithm for PMAs (and for CPMAs).
\item An implementation of the PMA and CPMA with the parallel batch-update
  algorithm in \cpp.
\item An evaluation of the PMA/CPMA with \pactrees and
  P-trees. 
\item An evaluation of \system, a dynamic-graph-processing system based on the
  CPMA, compared to \cpac and Aspen.
\end{itemize}

\begin{table*}[t]
  \centering
 \resizebox{\textwidth}{!}{%
  \begin{tabular}{llllll}
    \hline
                        & \multicolumn{2}{c}{\textit{Work}}      & \phantom{a}              &
    \multicolumn{2}{c}{\textit{Span}}                                                                                                                                            \\
    \cmidrule{2-3} \cmidrule{5-6}
    \textit{Operation}  & \textit{CPMA (this work)}                 &
    \textit{Compressed PaC-tree }~\cite{DhulBlGu22}
                        &                                        &  \textit{CPMA (this work)}    & \textit{Compressed PaC-tree}~\cite{DhulBlGu22}                                     \\
    \hline
    Insert/delete       & $O((\log^2 (n))/B + \log (n))^\dagger$ & $O(\log(n) + P)$         &                                                & $O(\log(n))$
                        & $O(\log(n) + P/B)$                                                                                                                                     \\
    Batch insert/delete & \batchwork\textsuperscript{$\dagger$}  &
    $O(k\log(n/Pk))^\dagger$
                        &                                        & \batchspan               & $O(\log(n/P)\log k +
      P/B)$                                                                                                                                                                      \\
    Search              & $O(\log(n))$                           & $O(\log(n) + P)$         &                                                & $O(\log(n))$ & $O(\log(n) + P/B)$ \\
    Range query         & $O(\log(n) + r/B)$                     & $O(\log(n) + (P + r)/B)$ &                                                & $O(\log(n))$ &
    $O(\log
      (n) + P/B)$                                                                                                                                                                \\
    \hline
  \end{tabular}
  }
  \caption{Asymptotic bounds for operations in a CPMA and compressed
    PaC-tree. We use $B$ to denote the cache-line size, $k$ to denote the size
    of the batch, $r$ to denote the number of elements returned by the range
    query, and $P$ to denote the user-specified tree node block size in the
    PaC-tree (called $B$ in~\cite{DhulBlGu22}). Bounds with
    \textsuperscript{$\dagger$} are amortized. All bounds are $\Omega(1)$.}
  \label{tab:bounds-summary}
\end{table*}


\section{Related work}\label{sec:related}
This section describes how this work relates to prior work in parallel data structures. Specifically,
it discusses concurrent versus batch-parallel data structures and their use cases.

There is extensive work on concurrent data structures such as trees~\cite{aksenov2017concurrency, arbel2018getting, braginsky2012lock, bronson2010practical, MaoKoMo12, ellen2010non, kung1980concurrent, natarajan2014fast}, skip lists~\cite{herlihy2006provably, pugh1998concurrent}, and PMAs~\cite{WheatmanXu21}. Concurrent data structures are mostly orthogonal to this paper, which focuses on batch-parallel data
structures. Existing concurrent trees typically support mostly point operations (i.e., linearizable inserts/deletes and finds), whereas the CPMA in this paper also supports range queries and maps (and associated operations such as filter and reduce). Some recent work studies range queries in concurrent trees~\cite{BasinBoBr20, FatoPaRu19}. On the other hand, concurrent trees support asynchronous updates, which are more
general than batch updates because batch updates require a single writer. Therefore, fairly
comparing concurrent and batch-parallel data structures on update throughput is
challenging as their update functionalities are different.

The PAM paper~\cite{sun2018pam} demonstrated that batch-parallel binary trees
can achieve orders of magnitude higher insertion throughput compared to concurrent cache-optimized trees~\cite{MaoKoMo12, wang2018building, zhang2016reducing}. 

Batch-parallel and concurrent data structures are suited for different use
cases. For example, batch-parallel data structures have recently become popular for
both practical~\cite{EdgigerMcRi12, KyrolaBlGu12, MackoMaMa15,
  DhulipalaBlSh19} and theoretical~\cite{AcarAnBl19,NowiOn21,
  dhulipala2020parallel,DhulLiSh21,FerrLu94,TsengDhBl19} dynamic-graph
algorithms and containers. They are well-suited for applications with a large number of requests in a short time, such as stream processing or loop join~\cite{kim2010fast}. In contrast, concurrent data structures have been
used extensively in key-value stores for online transaction processing
applications that emphasize point operations such as put and get~\cite{ycsb}.


\section{Packed Memory Array}\label{sec:pma-prelim}
This section reviews the Packed Memory Array~\cite{BendDeFa00, itai1981sparse}
(PMA) data structure to understand the improvements in later sections. First, it
introduces the theoretical models used to analyze the PMA. It then describes the
PMA's structure and supported operations. Finally, it
details how to perform point updates in a PMA, which forms the basis for the
batch-update algorithm in~\secref{batch-updates}.

\paragraph{Analysis method}
~\tabref{bounds-summary} summarizes the bounds for key parallel operations in
the CPMA and compressed PaC-tree in the \defn{work-span
  model}~\cite[Chapter~27]{CLRS} and the \defn{external-memory
  model}~\cite{AggarwalVi88}.  The \defn{work} is the total time to execute the
entire algorithm in serial. The \defn{span} is the longest serial chain of
dependencies in the computation.
In the work-span model with binary forking, a parallel for loop with $k$
iterations with $O(1)$ work per iteration has $O(k)$ work and $O(\log(k))$ span.

The external-memory model introduces the cache-line-size parameter $B$ and measures algorithm cost in
terms of cache-line transfers. 

\paragraph{Design and operations}
The PMA maintains elements in sorted order in an array with (a constant factor
of) empty spaces between its elements.  Specifically, a PMA with $n$ elements
uses $N = \Theta(n)$ cells. The empty cells enable fast updates by reducing the
amount of data movement necessary to maintain the elements in order. The primary
feature of a PMA is that it stores data in contiguous memory, which enables fast
cache-efficient iteration through the elements.

A PMA exposes four operations:
\begin{itemize}
\item \insertsimple: inserts element \proc{x} into the PMA.
\item \deletesimple: deletes element \proc{x} from the PMA, if it exists.
\item \searchsimple: returns a pointer to the smallest element that is at least
  \proc{x} in the PMA.
\item \proc{range\_map(start, end, f)}: applies the function \texttt{f} to all
  elements in the range [\texttt{start}, \texttt{end}).
\end{itemize}

In this paper, we use the terms ``range map'' and ``range query''
interchangeably. Range queries can be implemented with the more general range
map, but we use the more popular term ``range query.''

The PMA supports
point queries (\texttt{search}) in $O(\log (n))$ cache-line transfers and
updates in $O((\log^2(n))/B + \log(n))$ (amortized and worst-case) cache-line
transfers~\cite{bender2017file, BendDeFa00,bender2005cache,
  willard1982maintaining, willard1986good, willard1992density}.  The PMA
supports efficient iteration of the elements in sorted order, enabling fast
scans and range queries.  Specifically, the PMA supports the \rangemap
operation on $k$ elements in $O(\log (n) + k/B)$ transfers. It implements
\rangemap with a search for the first element in the range, then a
scan until the end of the range.
PMAs are asymptotically worse than \pactrees for
all inserts/deletes and match them for search and range queries (\tabref{bounds-summary}).

The PMA defines an implicit binary tree with leaves of size 
$\Theta(\log(N))$ cells. That is, the implicit tree has $\Theta(N/\log(N))$
leaves and height $\Theta(\log(N/\log(N)))$.  Every node in the PMA tree has a
corresponding \defn{region} of cells. Each leaf \\
$i \in \{0, \ldots, N/\log(N) - 1\}$ has the region $[i\log(N), (i+1)\log(N))$,
and each internal node's region encompasses all of the regions of its
descendants. The \defn{density} of a region in the PMA is the fraction of
occupied cells in that region.

Each node of the PMA tree has an \defn{upper density bound} that determines the
allowed number of occupied cells in that node. If an insert causes a node's
density to exceed its upper density bound, the PMA enforces the density bound by
redistributing elements with that node's sibling, equalizing the densities
between them.  The density bound of a node depends on its
height. 

\paragraph{Updating a PMA}
\begin{figure}
  \centering
  \includegraphics[width=.8\columnwidth]{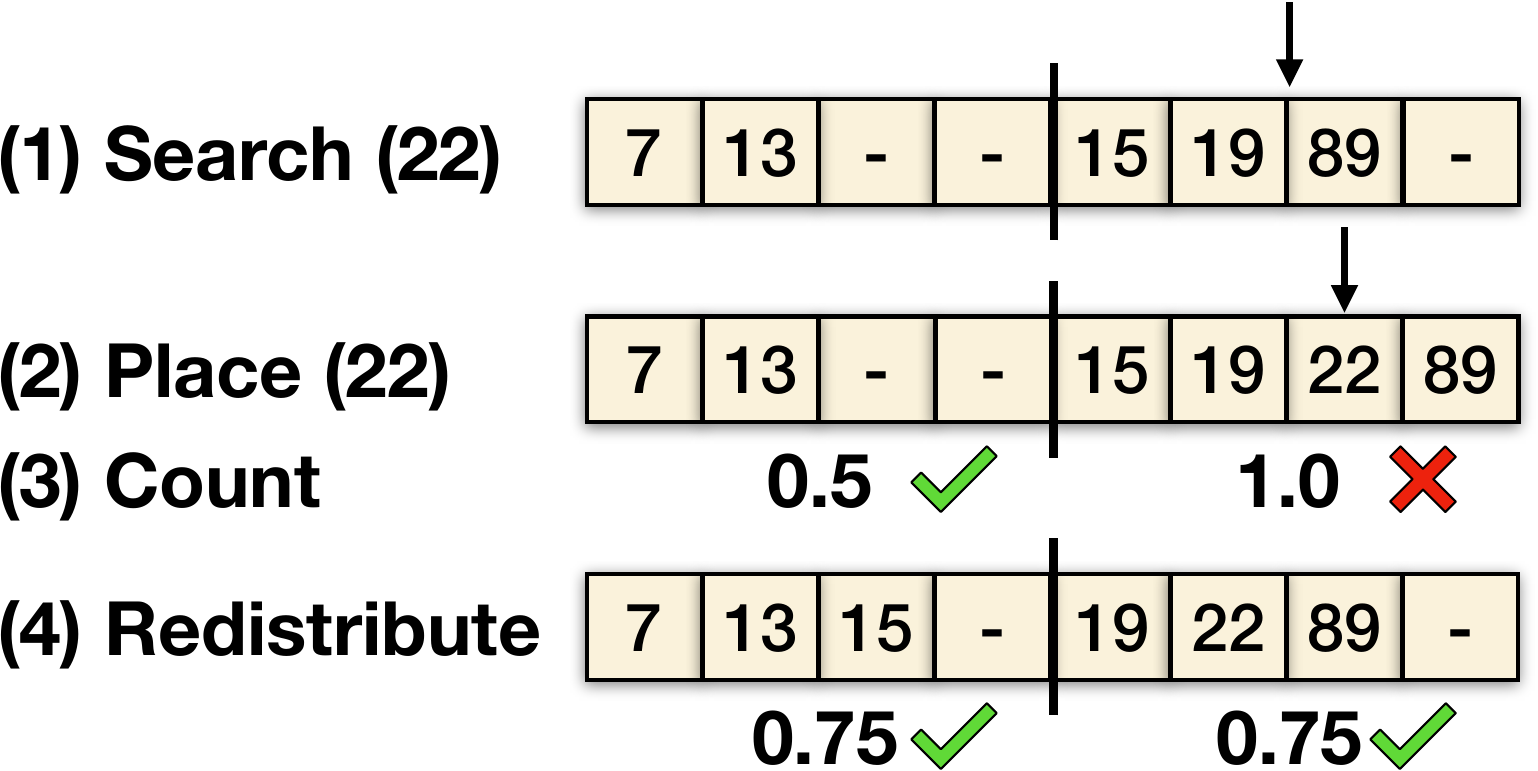}
  \caption{Example of an insertion in a PMA with leaf density bound of 0.9 and
    leaf size of 4.}
  \label{fig:pma-insert}
\end{figure}
A PMA maintains spaces between elements for efficient
updates. Since deletions are symmetric to insertions, we omit the discussion of deletes.

As shown in~\figref{pma-insert}, the four main steps in a PMA insertion are as
follows:
\begin{enumerate}
  \item \defn{Search} for the location that the element should be inserted into to
        maintain global sorted order.
  \item \defn{Place} the element at that location, potentially shifting some elements to
        make room.
  \item If the leaf that was inserted into violates its density bound, \defn{count} the density of nodes in the PMA to find a
        sibling to redistribute into.
  \item If necessary, \defn{redistribute} elements to maintain the correct distribution
        of empty spaces in the PMA.
\end{enumerate}

The four steps of a PMA insertion use the implicit tree to determine which leaf
to modify and which node to redistribute, if any. Steps (1) and (2) take $O(\log(n))$ cache-line transfers. Counting and redistributing elements (steps (3) and (4)) take $O((\log^2(n))/B)$
(amortized and worst-case) cache-line transfers~\cite{bender2017file,
  BendDeFa00,bender2005cache, willard1982maintaining, willard1986good,
  willard1992density}.

\paragraph{Resizing a PMA}
If the root-to-leaf traversal after an insert reaches the root and finds that
its density bound has been violated, the entire PMA is copied to a larger array
and the elements are distributed equally amongst the leaves of the new PMA.



\section{Parallel Batch Updates in a PMA}\label{sec:batch-updates}

Batching updates in a PMA improves throughput by sharing work between updates
and simplifying parallelization. This section describes how to apply batch
inserts in a PMA (batch deletes are symmetric).

We present a work-efficient parallel batch-insert algorithm for
PMAs. A \defn{work-efficient} parallel algorithm performs no more than a
constant factor of extra operations than the serial algorithm for the same
problem. Serially inserting $k$ elements into a PMA with $n$ elements takes
$O(k(\log(n) + (\log^2 (n))/B))$ cache-line transfers. Unfortunately, a naive
algorithm that parallelizes over the inserts is not work-efficient because it
may recount densities to determine which regions to redistribute. Supporting work-efficient batch inserts requires careful algorithm design
to avoid redundant work. Finally, we conclude the section with a microbenchmark
that demonstrates the serial and parallel scalability of batch inserts in PMAs.

The \defn{batch-insert problem} for PMAs takes as input a PMA with $n$
elements and a batch with $k$ sorted elements to insert. An unsorted batch can
be converted into a sorted batch in $O(k\log (k))$ work.

The optimal strategy for applying a batch of updates depends on the size of the
batch. At one extreme, if $k$ is small (e.g., $k < 100$), the overheads from the
batch-update algorithm outweigh the benefits, so point updates are more
efficient than batch updates. At the other extreme, if $k$ is large (e.g.,
$k \geq n/10$), the optimal algorithm is to rebuild the entire data structure
with a linear two-finger merge . The batch-insert algorithm for PMAs performs
local merges to address the intermediate case between these two extremes.

The parallel \defn{batch-insert algorithm} for PMAs applies a
batch of updates efficiently in the case where neither point insertions nor a
complete merge are the best options. Therefore, we focus on the
case\footnote{The assumption $k = o(n)$ is only used in the proofs to ensure
  sorting does not dominate the total cost.} where $\omega(1) = k = o(n)$.

The batch-insert algorithm consists of three phases: (1) a batch-merge
phase, (2) a counting phase, and (3) a redistribute phase. The phases proceed in
serial, but each phase is parallelized internally. The phases adapt the steps of
a PMA insertion described in~\secref{pma-prelim} to the batch setting. The
batch-merge phase combines the search and place steps, and the counting and
redistribute phases generalize their counterparts from point inserts.

\begin{figure}
  \centering
  \includegraphics[width=\columnwidth]{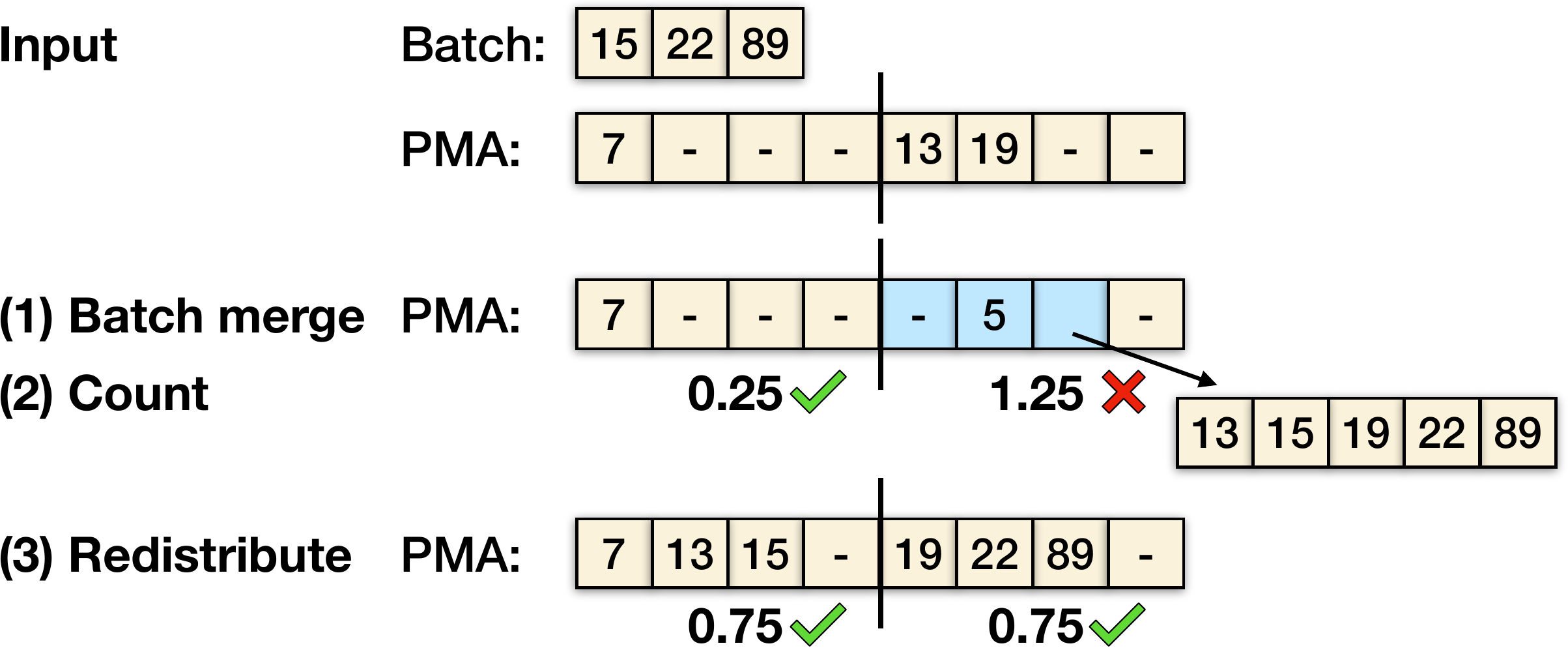}
  \caption{Example of batch insertion in a PMA with leaf density bound of 0.9
    and leaf size of 4.  After the merge, there are more elements in the second
    leaf than the leaf size, so the number of elements is stored in the leaf,
    and the elements are stored out-of-place until the redistribute.}
  \label{fig:batch-insert-fig}
\end{figure}

At a high level, the parallel \defn{batch-merge phase} divides the PMA and the
batch recursively into independent sections and operates independently on those
sections. Each recursive step first merges elements from the batch into one leaf
of the PMA, and then recurses down on the remaining left and right portions of
the batch. This recursive merge phase is inspired by recursive join-based
algorithms in batch-parallel trees~\cite{Adams92,Adams93, DhulBlGu22,
  DhulipalaBlSh19, BlelFeSu16, sun2018pam}. Existing join
algorithms for tree layouts rely on pointer adjustments which do not easily translate into array layouts.

At each step of the recursion, we perform a PMA search for the midpoint (median)
of the current batch and merge the relevant elements from the batch destined for
that leaf into the target leaf. Finding the bounds in the batch of all elements
in that leaf takes two searches (one backwards and one forwards).  Once the
endpoints have been found, we fork the merge of all relevant elements from the
current batch into the target leaf. If the number of elements destined for a
leaf is sufficiently large, we use a parallel merge algorithm with
load-balancing guarantees to achieve parallelism~\cite{AklSa87}.  Finally, we
recurse on the remaining left and right sides of the batch in parallel.

\begin{lemma}
  \label{lem:batch-insert-merge-work-span}
  Given a batch of $\;k$ sorted elements, the work of the batch-merge phase is
  $O(k\log(n))$, and the span is \\ $O(\log(k)\log(n))$.
\end{lemma}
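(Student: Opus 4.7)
The batch-merge phase has a clean divide-and-conquer structure, so the plan is to analyze the recursion tree and then charge the search and merge costs to its nodes and leaves. I would first observe that each recursive call does three kinds of work: (i) a PMA search for the midpoint of the current batch segment at cost $O(\log n)$, (ii) two boundary searches in the batch to pick out the elements destined for that leaf at cost $O(\log k)$, and (iii) a (potentially parallel) merge of those elements into the located PMA leaf. The recursion then forks on the remaining left and right batch slices and on the merge itself.

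\textbf{Work.} The recursion tree is binary and its leaves correspond either to empty batch slices or to slices that have been fully merged into a single PMA leaf, so it has $O(k)$ nodes. Summing the per-node search cost gives $O(k\log n)$. For the merges, I would charge each element of the batch to the single leaf into which it is merged: each element contributes $O(1)$ to the merge it participates in, and each of the at most $O(k)$ touched PMA leaves has size $O(\log n)$, so the total merge work is $O(k + k \cdot \log n) = O(k\log n)$ in the worst case where the merge step is linear in the sum of the two input sizes. Adding the two contributions gives the claimed $O(k\log n)$ total work.

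\textbf{Span.} Because each recursive call halves the batch, the recursion tree has depth $O(\log k)$. Along any root-to-leaf path in this recursion, each level contributes $O(\log n)$ span for the midpoint/boundary searches and $O(\log n)$ span for the parallel merge step (using a load-balanced parallel merge as in~\cite{AklSa87}, which merges inputs of total size $m$ into a region of size $O(\log n)$ with span $O(\log n + \log m) = O(\log n)$ since $m \le k \le n$). Since the left-recursion, right-recursion, and merge are forked in parallel, the per-level span is $O(\log n)$, for a total of $O(\log k \log n)$.

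\textbf{Main obstacle.} The only subtle point I anticipate is the merge bookkeeping: one must argue that although each recursion node may merge a different number of batch elements, the sum over the tree is exactly $k$ (each batch element is placed into exactly one leaf), and that the per-leaf additive $O(\log n)$ term from scanning a PMA leaf does not blow up. Bounding the number of touched leaves by $\min(k, N/\log N) = O(k)$ handles this cleanly and yields the stated work bound; a similar argument ensures that the parallel merge at each recursion node has span $O(\log n)$ regardless of how the batch is distributed.
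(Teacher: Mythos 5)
Your proposal is correct and follows essentially the same route as the paper: bound the search work over the $O(k)$ recursion nodes, charge the merge work to the $k$ elements and the $O(\log n)$-sized leaves they land in, and bound the span by the $O(\log k)$ recursion depth times the $O(\log n)$ per-level search cost. The only cosmetic difference is that you credit the parallel merge with $O(\log n)$ span while the paper uses the weaker $O(\log^2 k)$ bound for its merges; either suffices since $\log^2 k = O(\log k\log n)$.
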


\begin{proof}
  The height of the recursion is $O(\log(k))$, and each search in the PMA takes
  $O(\log(n))$ work. Finding the first and last element in the batch destined
  for the leaf takes $O(\log(k))$ work with exponential searches, which is
  smaller than \\ $O(\log(n))$. Therefore, the work and span of finding the
  bounds for the recursion is $O(\log(k)\log(n))$.

  In the worst case for the work, each element in the batch could be destined
  for a different leaf, so the total work of merging $k$ elements into $k$
  leaves is $O(k\log(n))$, which is asymptotically larger than the work to
  perform the recursion.

  The worst-case span for any one of the merges is $O(\log(k))$, so the total
  worst-case span of all the merges is $O(\log^2(k))$, which is less than
  $O(\log(k)\log(n))$.
\end{proof}

\begin{figure*}[t]
  \begin{center}
    \includegraphics[width=.9\textwidth]{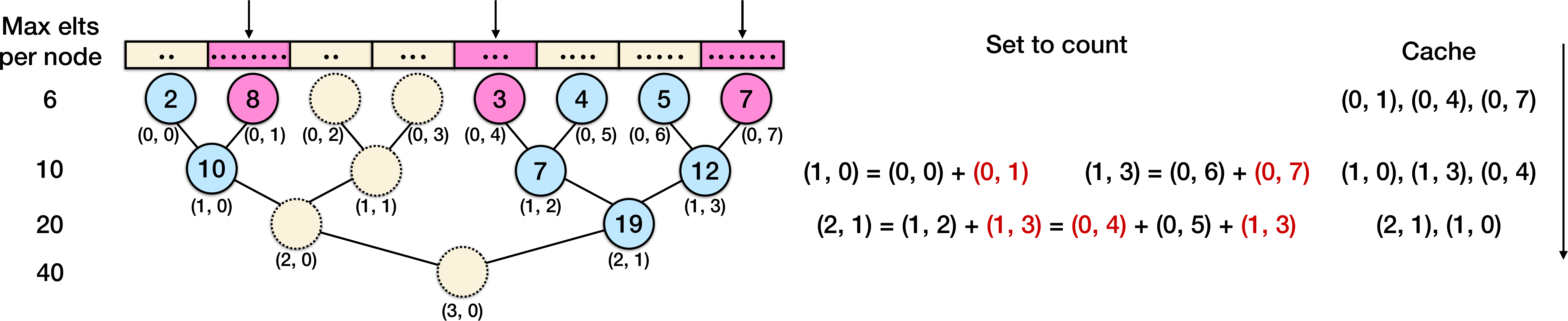}
  \end{center}
  \caption{An example of the work-efficient counting algorithm for batch
    updates. The blocks at the top represent the PMA leaves and the dots
    represent elements in the PMA. The pink blocks with arrows represent leaves
    that were touched during a batch update. The tree below the PMA is the
    implicit PMA tree of nodes labeled with a tuple of (height, index) (indices
    are assigned left to right). The blue solid circles represent PMA nodes that
    must be counted because their sibling or child violated its density. The tan
    dotted circles represent PMA nodes that did not need to be
    counted.}
  \label{fig:counting-alg}
\end{figure*}

When merging elements from the batch into a leaf, the target leaf may overflow
because it does not have enough space to hold all the elements destined for it.
To resolve this issue, the batch merge copies all elements into separate memory
and keeps the size of the extra memory as well as a pointer to it in the
leaf. This extra data is then cleaned up after the merge during the
redistribution phase. ~\figref{batch-insert-fig} illustrates a batch merge, leaf
overflow, and subsequent redistribution.

During the recursive batch merge, we keep track of all modified PMA leaves in a
thread-safe set for use in the counting and redistribution phases.

\paragraph{Counting phase}
After merging all elements into the PMA, the batch insert algorithm performs a
\defn{counting phase} where it finds the PMA nodes that violate their
density bounds for later redistribution.  The $O(\log^2(n)/B)$ work bound for point
insertions in the PMA comes from amortized analysis of the counting and
redistribution phase~\cite{itai1981sparse}, so efficiently counting and
redistributing in the batch-parallel setting is critical to achieving
work-efficiency.  To understand how to avoid redundant work, we start with a presentation of an efficient serial algorithm and describe how simply parallelizing this algorithm can lead to extra work.  We then present our work-efficient parallel algorithm.


An efficient serial batch algorithm must count each required cell exactly once. The
algorithm starts with the set of leaves that were touched in the batch-merge
phase. The ancestors of these leaves in the implicit PMA tree may need to be
redistributed.  The serial algorithm checks every leaf in turn. If a leaf
violates its density bound, the algorithm then walks up the implicit PMA tree
from that leaf until it finds a node that respects its density bound. Finding
the density of a node involves counting all of its descendants. By caching every
result and checking the cached results before counting, the serial algorithm
counts every required cell exactly once.

Unfortunately, simply parallelizing this serial algorithm over the leaves is not
work-efficient because the algorithm may recount PMA nodes whose densities have
not been cached yet. Therefore, the parallel algorithm may recount the same
region more than a constant number of times if many leaves share the same
ancestor to be redistributed.

To resolve this issue, we devise a new work-efficient parallel \defn{counting
  algorithm} that counts each required PMA cell exactly once.
~\figref{counting-alg} presents a worked example of this counting algorithm.
The counting algorithm takes as input the leaves that were modified in the batch
merge and outputs the set of PMA nodes that need to be redistributed.

This parallel algorithm avoids redundant work by processing the levels serially
from the leaves to the root and saving any counts for later lookups by nodes in
higher levels. At each level, we maintain a thread-safe set of nodes that need
to be counted. This set is initialized with the leaves that were affected by the
batch merge. The levels are processed serially, but all nodes at each level are
processed in parallel. If any node at some level $i$ exceeds its density bound,
the algorithm adds its parent to the set of nodes to be counted at level
$i+1$. The algorithm terminates when there are no more nodes to be counted, or
it has reached the PMA root.

\begin{lemma}
  \label{lem:batch-insert-counting-work}
  The parallel counting algorithm is work-efficient.
\end{lemma}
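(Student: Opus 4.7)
My plan is to argue that the total work performed by the parallel counting algorithm matches, to within constants, the work of an optimal serial counting algorithm that uses a cache of previously computed node densities. Since the serial algorithm visits each PMA cell at most a constant number of times by reusing cached children counts whenever possible, establishing this equivalence immediately yields work-efficiency.

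First, I would characterize the set of PMA nodes whose densities are actually computed. In both the serial and parallel versions, a node is counted exactly when either (i) it is a leaf touched during the batch-merge phase, or (ii) one of its children was counted and found to violate its density bound. The thread-safe level sets in the parallel algorithm encode precisely this rule: they are seeded with the touched leaves at level $0$, and the parent of any violating node at level $i$ is inserted into the set at level $i+1$. Consequently, the two algorithms process the same multiset of nodes, so it suffices to account for the per-node counting cost.

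Next, I would bound the cost of counting each such node. When processing a node $v$ at level $i+1$, the child that triggered $v$'s insertion into the level set is guaranteed to have been processed at level $i$, so its cached density is available in $O(1)$ time. If $v$'s other child is also cached (because some other touched leaf in its subtree triggered it), the two counts are summed in $O(1)$ work. Otherwise the other child's region must be scanned directly, at cost proportional to its size; the result is then cached, so this scan happens at most once for that region. The efficient serial algorithm performs an identical direct scan in the same situation, and therefore both algorithms incur the same total scanning work.

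The main obstacle I anticipate is formalizing the deduplication argument in the parallel setting, where many touched leaves may share common ancestors and concurrently attempt to propagate up to the same parent. Here the set-based accumulation at each level is crucial: the parent is inserted into the next-level set at most once regardless of how many of its descendants trigger it, so no node (and hence no cell) contributes to more than a constant amount of counting work. Combined with the level-by-level serial schedule that guarantees every level-$i$ cache is fully populated before any level-$(i+1)$ node is processed, this ensures that parallelism adds no asymptotic overhead over the serial counting work and the parallel algorithm is work-efficient.
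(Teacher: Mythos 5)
Your overall strategy matches the paper's: compare the parallel algorithm against the cache-augmented serial counting algorithm, rely on the level-by-level schedule to guarantee that lower-level caches are populated, and use the set semantics to deduplicate parents. The gap is in your handling of the uncounted child. You write that if the other child $y$ of a node $x$ is not cached, ``the other child's region must be scanned directly, at cost proportional to its size,'' and that this is harmless because the scan happens at most once \emph{per region}. But once-per-region does not imply $O(1)$-per-cell: a single cell lies in $\Theta(\log n)$ nested regions, and nothing in your argument prevents it from being reswept each time one of its uncounted ancestors is blindly scanned. Concretely, a touched leaf $\ell$ can be counted at level $0$, fail to violate its bound (so its parent is never enqueued), and then have an ancestor $a_1$ that is never enqueued but becomes the uncounted child of some enqueued node $a_2$; a full scan of $a_1$ recounts $\ell$'s cells. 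Since $a_2$ need not violate its own bound, the same situation can recur with a higher uncounted ancestor $a_3$ inside an enqueued $a_4$, and so on, so a cell can be counted $\Theta(\log n)$ times. This breaks work-efficiency relative to the serial algorithm, which counts every cell exactly once. Your appeal to ``the efficient serial algorithm performs an identical direct scan'' is also not accurate: the serial algorithm checks the cache before counting any node, so it never does a blind linear sweep of a region containing cached descendants.

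The missing idea, which is the crux of the paper's proof, is that the count of an uncounted child $y$ must itself be cache-aware: it is computed by a top-down (parallel) recursion from $y$ that, at every descendant, first checks whether that descendant's count is already cached and, if so, terminates that branch and incorporates the cached value; only cache-miss paths continue down to leaves. This guarantees that no cell is ever counted more than once, which is exactly the invariant you need. With that replacement for your ``direct scan'' step, the rest of your argument (identical node sets, $O(1)$ combination when both children are cached, set-based deduplication of parents) goes through.
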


\begin{proof}
  The parallel counting algorithm caches results from each counted region as it
  processes the levels of the PMA tree. Due to the serial iteration of levels,
  all nodes to be counted at a level are counted in parallel. When a node $x$
  needs to be counted, no other node $y$ at that level will need to count any of
  $x$'s descendants since the set ensures that $x\neq y$.  All descendants of
  $x$ have either already been counted and cached, or will be counted exactly
  once and cached to avoid recounting.
\end{proof}

\begin{lemma}
  \label{lem:batch-insert-counting-span}
  The span of the parallel counting algorithm is $O(\log^2(n))$.
\end{lemma}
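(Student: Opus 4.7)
The plan is to decompose the span into two factors: the number of levels that must be processed serially, times the worst-case parallel span of the work done at any one level. First I would observe that the implicit PMA tree has height $\Theta(\log(N/\log N)) = O(\log n)$, since its leaves have size $\Theta(\log N)$. The counting algorithm iterates from level $0$ up to the root, and the levels must be processed in order (since the set of nodes under consideration at level $i+1$ is determined by which nodes at level $i$ overflowed). This contributes $O(\log n)$ serial levels.

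Next, I would bound the span of processing a single level. At level $i$, all candidate nodes are processed in parallel. For each node $x$, Case 1 from the proof of Lemma~\ref{lem:batch-insert-counting-work} takes $O(1)$ to combine two cached children's counts. Case 2 invokes a top-down parallel recursion on the uncounted subtree rooted at $x$'s other child $y$: at every internal node the recursion forks its two children in parallel, either terminating early at a cached descendant (with $O(1)$ span) or descending to a leaf of size $O(\log n)$ which can be counted in $O(\log \log n)$ span. Because the subtree has height at most $i = O(\log n)$ and the combining of returned counts is a parallel reduction up the same tree with $O(\log n)$ span, the total span of Case 2 is $O(\log n)$. Adding affected parents into the thread-safe set for level $i+1$ adds at most an $O(\log n)$ contention factor, so each level contributes $O(\log n)$ span.

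Multiplying the $O(\log n)$ sequential levels by the $O(\log n)$ span per level gives the stated $O(\log^2 n)$ bound.

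The main obstacle will be arguing carefully that the Case 2 top-down recursion truly has span only $O(\log n)$ and not, say, $O(\log^2 n)$ per level (which would blow up the total to $O(\log^3 n)$). The subtlety is that the recursion at different candidate nodes on the same level may be invoked concurrently, and we must ensure cache lookups — which prune recursive branches once a previously-counted descendant is reached — do not introduce hidden serial dependencies between these concurrent recursions. Because each cache entry is written exactly once (by the work-efficiency argument of Lemma~\ref{lem:batch-insert-counting-work}) and then only read, the reads can all proceed in parallel, so the per-level span remains $O(\log n)$ as needed.
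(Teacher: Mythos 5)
Your proposal is correct and follows essentially the same argument as the paper: serial iteration over the $O(\log n)$ levels, with the per-level cost dominated by a top-down parallel recursion of depth at most $i = O(\log n)$ plus $O(\log\log n)$ for counting a leaf (the paper writes this as $\sum_{i=0}^{\log n} i = O(\log^2 n)$ rather than multiplying a uniform per-level bound, but the decomposition is identical). Your added remarks on the write-once/read-many cache entries and set contention are reasonable refinements the paper leaves implicit.
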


\begin{proof}
  The counting algorithm serially iterates over at most $O(\log(n))$ levels of
  the PMA because the height of the PMA tree is bounded by $O(\log(n))$. In the
  worst case, for each level $i$, the algorithm may have to recurse down $i$
  levels to count, so the worst-case span of traversing the PMA tree levels is:
  $\sum_{i = 0}^{\log(n)} i = O(\log^2(n))$. The PMA leaves are $O(\log(n))$
  cells each, so the total span of counting is $O(\log^2(n))$.
\end{proof}

\paragraph{Redistribution phase}
Once the counting phase has identified the correct regions to redistribute, the
PMA redistributes regions by performing two copies of the relevant data. The first copy
packs the regions to redistribute from the PMA into a buffer, and the second copy
equalizes the densities in the regions to redistribute by spreading the elements
evenly from the buffer into the target leaves.

\begin{lemma}
  \label{lem:batch-redistribute-work-span}
  Given a batch of $\;k$ sorted elements, the work of the redistribute phase is
  $O((k\log^2 (n))/B))$ amortized cache-line transfers, and the worst-case span
  is  $O(\log^2(n))$.
\end{lemma}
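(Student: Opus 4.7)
The plan is to bound the work and the span separately, leveraging the structure imposed by the counting phase (which outputs a set of disjoint maximal nodes to redistribute) and reducing the work bound to the classical amortized PMA analysis.

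\textbf{Work.} First I would establish the unit cost of a single redistribute: redistributing a node whose region contains $s$ cells costs $O(s/B)$ cache-line transfers, since both the pack step (from the PMA into a contiguous buffer) and the spread step (back out into the target leaves) are linear scans of contiguous memory. Next, I would argue that the total work of the redistribute phase on a batch of $k$ sorted inserts is no worse than doing the $k$ inserts as $k$ serial point inserts: in the batch setting, a node is redistributed at most once, whereas in the point-insert setting it may be redistributed multiple times, so the batch only saves work. Since the standard PMA amortized analysis (by a potential argument on the slack in each region at each of the $O(\log n)$ levels, as in the cited PMA literature) charges each point insert $O(\log^2(n)/B)$ amortized cache-line transfers for all redistribute work along its leaf-to-root path, summing over the $k$ elements of the batch yields the amortized bound $O(k\log^2(n)/B)$.

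\textbf{Span.} The counting phase emits a set of disjoint maximal nodes to redistribute, so all redistributes operate on non-overlapping regions and can be forked in parallel. Inside one redistribute at a node whose region has size $s\le n$, the pack step first computes a prefix sum of per-leaf occupancies (including the out-of-place overflow buffers produced by the batch-merge phase) in span $O(\log(n/\log n)) = O(\log n)$, then performs a parallel scatter in span $O(\log n)$; the spread step is symmetric. Hence a single redistribute has span $O(\log n)$. Taking the maximum over regions of $O(\log n)$ span, together with the $O(\log n)$ span for the parallel-for over the set of regions (which has size at most $O(n/\log n)$), gives a worst-case span of $O(\log^2 n)$ for the redistribute phase.

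\textbf{Anticipated obstacle.} The routine part is the span argument, since the two copies are just parallel scans over contiguous memory. The main subtlety is the amortized work bound: I have to argue that the batch setting does not void the classical PMA potential argument. Concretely, I need to verify that when several leaves share a common ancestor that is redistributed once, charging that redistribute to the $\Theta(s/\log n)$ inserts responsible for the density-bound violation at the ancestor still respects the $O(\log n/B)$ per-level, per-insert charge used in the serial proof. This is the step I would spend the most care on, because it is the key place where the batch algorithm differs structurally from $k$ independent point inserts while still matching their asymptotic cost.
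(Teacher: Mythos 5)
Your proposal is correct, and your span argument is essentially the paper's: disjoint regions are redistributed in parallel, each via parallel copies in and out with span $O(\log n)$, and the parallel loop over regions adds another $O(\log n)$ factor, giving $O(\log^2 n)$ overall (the paper states this in two sentences; you add the prefix-sum/scatter detail, which is fine). The work bound is where you take a slightly different route. The paper argues that the redistribute work is bounded above by the counting-phase work (since every element that is redistributed must first have been counted), and then invokes \lemref{batch-insert-counting-work} to say the counting phase is work-efficient and hence within the serial amortized bound of $O(k\log^2(n)/B)$ transfers. You instead compare the batch redistributes directly to the redistributes performed by $k$ serial point inserts and re-run the classical potential argument. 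Both reductions land on the same classical amortized analysis; the paper's is shorter because it piggybacks on the already-proved counting lemma, while yours is more self-contained and, to your credit, explicitly isolates the one step that genuinely needs checking --- that charging a single large redistribute at a shared ancestor to the $\Theta(s/\log n)$ responsible inserts still respects the per-level, per-insert charge. The paper glosses over exactly that point, so your ``anticipated obstacle'' is not a gap in your proof so much as a place where you are being more careful than the source.
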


\begin{proof}
  The work of the redistribute phase is bounded above by the work of the
  counting phase, because the number of elements that need to be redistributed is
  at most the number of elements that need to be counted.
  From~\lemref{batch-insert-counting-work}, the counting step is work-efficient,
  so it takes no more than the serial amortized work bound of
  $O((k\log^2 (n))/B))$ cache-line transfers.

  The span of the redistribute phase is bounded above by $O(\log^2 (n))$ because
  there are at most $n$ independent sections to redistribute of size $n$
  each. Redistributing each one involves a parallel copy in and out, which has
  span $O(\log(n))$.
\end{proof}

\paragraph{Putting it all together}
Analyzing the entire batch-insert algorithm just involves summing the work and
span of the merge, counting, and redistribute phases of the batch-insert
algorithm.

\begin{theorem}
  \label{thm:batch-insert-work}
  The batch-insert algorithm for PMAs inserts a batch of $\;k$ sorted
  elements in $O(k(\log(n) + \log^2(n)/B))$ amortized work and $O(\log^2(n))$
  worst-case span.
\end{theorem}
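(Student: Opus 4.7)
The plan is to combine the three phase-level lemmas already proved (merge, counting, redistribute) by summing their work and span contributions and then simplifying. Because the batch-insert algorithm runs the three phases sequentially while parallelizing within each phase, work and span are additive across phases.

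First I would collect the work bounds. The batch-merge phase contributes $O(k\log(n))$ work by \lemref{batch-insert-merge-work-span}. The counting phase, by \lemref{batch-insert-counting-work}, is work-efficient, so it inherits the serial amortized cost of $O((k\log^2(n))/B)$ cache-line transfers from the standard PMA amortized analysis. The redistribute phase, by \lemref{batch-redistribute-work-span}, is also $O((k\log^2(n))/B)$ amortized. Summing gives
\[
O\!\left(k\log(n)\right) + O\!\left(\tfrac{k\log^2(n)}{B}\right) = O\!\left(k\bigl(\log(n) + \tfrac{\log^2(n)}{B}\bigr)\right),
\]
matching the claimed amortized work.

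Next I would collect the span bounds. The merge phase has span $O(\log(k)\log(n))$, the counting phase has span $O(\log^2(n))$, and the redistribute phase has span $O(\log^2(n))$. Because the batch-insert problem assumes $k = o(n)$, we have $\log(k) \le \log(n)$, so $\log(k)\log(n) = O(\log^2(n))$. Summing the three spans therefore collapses to $O(\log^2(n))$ worst-case span.

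The only subtle point is reconciling the per-phase amortization into a global amortized bound. I expect this to be the main obstacle to flag, since the classical PMA potential argument amortizes redistribution cost against prior insertions; here we must verify that the batch-merge phase changes the potential by at most the same amount as the corresponding point insertions it replaces, so that the counting and redistribution work charged across a sequence of batches still telescopes correctly. Given that \lemref{batch-insert-counting-work} already asserts that no cell is counted more than once per batch and that the redistribute phase moves no more elements than the count phase inspects, the standard density-based potential argument carries over unchanged, and the theorem follows.
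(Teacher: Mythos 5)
Your proposal matches the paper's argument exactly: the paper proves the theorem by summing the work and span of the three phases using \lemref{batch-insert-merge-work-span}, \lemref{batch-insert-counting-work}, \lemref{batch-insert-counting-span}, and \lemref{batch-redistribute-work-span}, with the merge span $O(\log(k)\log(n))$ absorbed into $O(\log^2(n))$ since $k = o(n)$. Your closing remark about verifying that the density-based potential argument telescopes across batches is a reasonable extra precaution that the paper does not spell out, but it does not change the approach.
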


\begin{table}[t]
  \begin{center}
    \resizebox{\columnwidth}{!}{%
    \begin{tabular}{@{}cccccc@{}}
      \hline
      \begin{tabular}{@{}c@{}}\textit{Batch} \\ \textit{size}\end{tabular} & \begin{tabular}{@{}c@{}}\textit{Serial} \\ \textit{TP}\end{tabular} &
      \begin{tabular}{@{}c@{}}\textit{Speedup over} \\ \textit{serial point}  \end{tabular}
                                 & \begin{tabular}{@{}c@{}}\textit{Parallel}\\\textit{TP} \end{tabular} & \begin{tabular}{@{}c@{}}\textit{Speedup over}\\\textit{serial batch}  \end{tabular} & \begin{tabular}{@{}c@{}}\textit{Overall}\\\textit{speedup}  \end{tabular}               \\
      \hline
1---10&2.2E6&1.0&1.8E6&0.8&0.8\\
1E2&1.9E6&0.9&3.0E6&1.6&1.4\\
1E3&2.0E6&0.9&9.0E6&4.6&4.1\\
1E4&2.0E6&0.9&2.5E7&12.5&11.6\\
1E5&2.3E6&1.0&4.1E7&17.8&18.6\\
1E6&2.9E6&1.3&7.0E7&23.8&32.0\\
1E7&5.5E6&2.5&1.0E8&18.6&47.1\\
      \hline
    \end{tabular}
    }
    \caption{Throughput (TP) of serial and parallel batch insertions in the
      PMA. We use point insertions for small batches when the batch update
      algorithm does not provide practical benefits. Overall speedup is the
      speedup over serial point inserts.}
    \label{tab:pma-batch-micro}
  \end{center}
\end{table}

\paragraph{Batch insert microbenchmark} ~\tabref{pma-batch-micro} reports the
throughput of batch inserts as a function of batch size (using the setup
described in~\secref{pma-eval}). The PMA under test starts with 100 million
elements and we add an additional 100 million
elements. 

On one core, the batch-insert algorithm is up to $3\times$ faster than point
inserts when the batch is large.  Batch inserts in a PMA save computation over
point insertions by reducing the number of searches, the length of each search,
and the number of redistributions.  The batch algorithm performs only one binary
search per updated leaf because the remaining elements in the batch destined for
that leaf are merged in directly. Additionally, the searches are smaller because
they often search only a subsection of the PMA. Finally, the counting algorithm
combines ancestor ranges to redistribute in the PMA, potentially skipping levels
of redistribution.

Furthermore, ~\tabref{pma-batch-micro} shows that batch inserts in a PMA achieve
parallel speedup of up to about $19\times$ on \numcores cores (\numthreads
threads) as the batch size grows. The main bottleneck in the parallel scalability
of batch updates is memory bandwidth.~\secref{cpma} mitigates these
issues by adding compression to the batch-parallel PMA to reduce data movement.


\section{Compressed Packed Memory Array}\label{sec:cpma}

This section introduces, analyzes, and empirically evaluates the
\defn{Compressed Packed Memory Array} (CPMA).  Adding compression does not affect the PMA's asymptotic bounds.
Empirically, the CPMA achieves better parallel scalability than the PMA because
the parallel operations are memory-bound, so the CPMA's smaller size makes
better use of memory bandwidth.

\paragraph{Data compression techniques}
The CPMA exploits the fact that elements are stored in sorted order in a PMA to
apply delta encoding~\cite{smith1997scientist} to the elements.
\defn{Delta encoding} stores differences (deltas) between sequential
elements rather than the full element.  Given a sorted array $A$ of $n$
elements, delta encoding results in a new array $A'$ such that $A'_0 = A_0$ and
for all $i = 1, 2, \ldots, n-1$, $A'_i = A_i - A_{i-1}$.

These deltas can then be stored in byte codes,
which store an integer as a series of bytes~\cite{WittMoBe99, BlanfordBlKa04}. Each byte uses one bit as a
\emph{continue bit}, which indicates if the following byte starts a new element
or is a continuation of the previous element.

We use delta encoding with byte codes in the CPMA  because they are fast to decode and achieve most of
the memory savings of shorter codes~\cite{ShunDhBl15, DhulipalaBlSh19,
  BlanfordBlKa04}.

\begin{figure}[t]
  \begin{center}
    \includegraphics[width=\columnwidth]{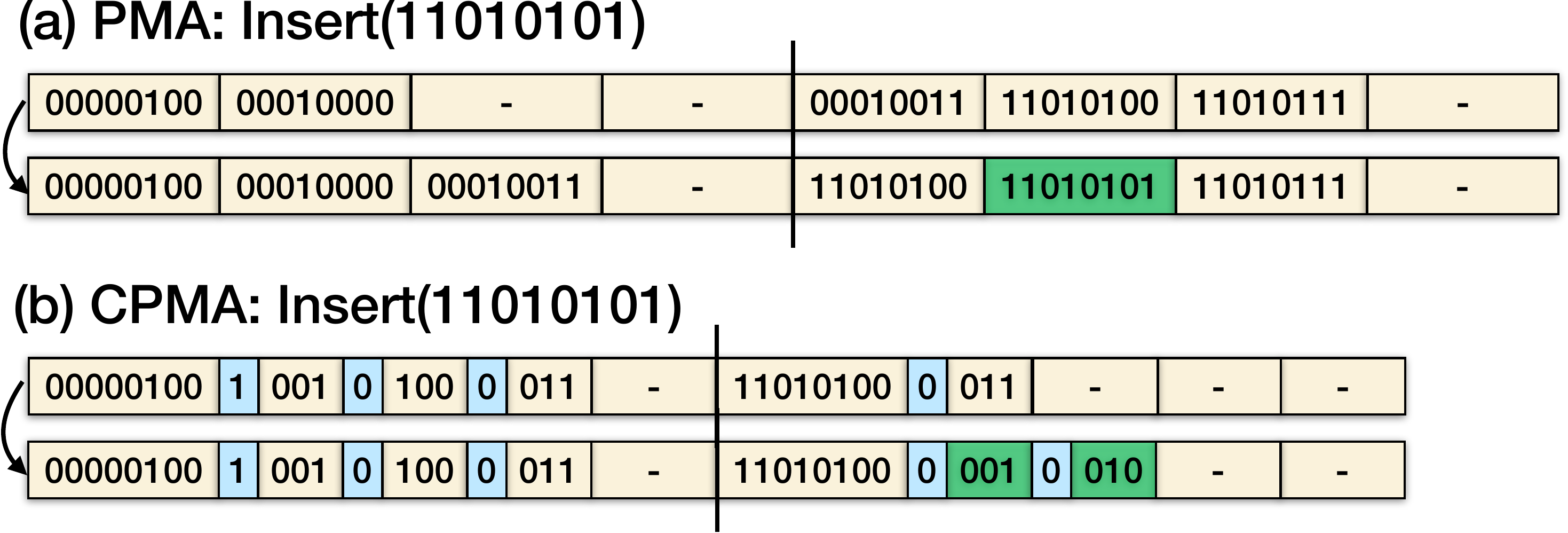}
  \end{center}
  \caption{An example of inserting the same element in a PMA and CPMA with the
    same elements. The density bound in all leaves is 0.9. Here,
    \texttt{sizeof(T)} is 8 bits, and a byte is 4 bits. The blue bits in the
    CPMA represent continue bits.  The green shaded cells in both the PMA and
    CPMA contain new data after the insert. The PMA redistributes its elements
    after the insertion, but the CPMA does not because the insertion did not
    violate the leaf density bound.}
    \vspace{.35cm}
  \label{fig:cpma-update}
\end{figure}

\paragraph{CPMA structure}
The CPMA maintains the same implicit binary tree structure as a
PMA and compresses the leaves.  Just like in the
PMA, a CPMA with $n$ elements and $N = \Theta(n)$ cells maintains leaves of size
$\Theta(\log(n))$ in order to achieve its asymptotic time bounds.  The CPMA
applies the \defn{packed-left} optimization, which packs elements to the left in
PMA leaves, for ease of compression~\cite{WheatmanXu21}. Packing the elements to
the left does not affect the PMA's (or CPMA's) asymptotic bounds\footnote{A
  traditional PMA redistributes all elements in a leaf after each
  insertion. Therefore, the packed-left property does not incur extra element moves over
  a regular PMA because both rearrange all elements in the leaf on each insert.}
because the bounds only depend on the density of the elements in the PMA
leaves~\cite{WheatmanXu21}.

A CPMA leaf stores its \defn{head}, or its first element, uncompressed, and
stores subsequent elements compressed with delta encoding and byte codes.  That
is, in a CPMA with elements of type \texttt{T}, the first \texttt{sizeof(T)}
bytes in each leaf contain the uncompressed head.  All following
cells take 1 byte each rather than \texttt{sizeof(T)} bytes.

The density bounds in a CPMA count byte density rather than element density. The
density in a CPMA node is the ratio of the number of filled bytes to the total
number of bytes available in the node.

\subsection*{CPMA Operations}

The CPMA maintains the same asymptotic bounds as the PMA
for point queries (searches) and point updates. Furthermore, compression does
not affect concurrency schemes for PMAs~\cite{WheatmanXu21} or the batch-update
algorithm from~\secref{batch-updates}.

The PMA's asymptotic bounds are derived from its implicit tree structure and
related density bounds. The main change in the CPMA is the compression of
each individual leaf, which does not affect the high-level implicit
tree structure.

The uncompressed head allows for efficient searching to find which leaf contains
an element.
The compressed leaves in the CPMA do not affect the high-level tree structure or
searches because each leaf can still be processed independently in $O(\log(n))$
work.

\paragraph{Point queries}
A CPMA on $n$ elements supports point queries in $O(\log (n))$ cache-line
transfers.  There are two steps in a point query in a CPMA: a binary search on
leaf heads, and then a pass through the leaf at the end of the binary search to
find the closest element. There are $O(n/\log(n))$ leaves, so a binary search
takes $O(\log(n))$ cache-line transfers. The leaf heads are stored uncompressed,
so there is no additional cost to perform the binary search on leaf heads
compared to a search in a PMA. After finding the target leaf, the CPMA performs
a search within that leaf. The size of each leaf is bounded by $O(\log(n))$, so
it takes $O(\log(n)/B)$ cache-line transfers to search a compressed leaf.

\paragraph{Point updates}
A CPMA on $n$ elements supports point updates in $O(\log(n) + (\log^2(n))/B)$
cache-line transfers. We will focus on the case of inserts, since deletes are
symmetric to inserts.~\figref{cpma-update} presents a worked example of the same
insert in a PMA and a CPMA.

The CPMA follows the same four steps of a PMA point update described
in~\secref{pma-prelim}.  We will focus on steps (2)-(4) (place, count, and
redistribute), since we already analyzed point queries.

 After performing a point query to find the target leaf, the CPMA places an
 element by adding a delta to the leaf and updating the following
 delta. Updating the leaf can be done in a single pass, which modifies up to
 $O(\log(n))$ cells because the size of the leaf is bounded by $O(\log(n))$. The
 CPMA matches the PMA's asymptotic bound on the number of cells modified during
 the place step.

Once the target leaf has been updated, the CPMA traverses up the leaf-to-root
path and redistributes any nodes that violate their density bounds just as in a
PMA. The amortized insert time bound comes from the checking and maintenance of
density bounds, which the CPMA supports in the same asymptotic cache-line transfers as a
PMA. Just as in a PMA, counting and redistributing in a CPMA takes cache-line transfers
linear in the size of the region.

\paragraph{Parallelizing the CPMA}
The compression in the CPMA does not conflict with existing lock-based
multiple-writer parallelism for PMAs~\cite{WheatmanXu21} because the locking
scheme depends on the implicit PMA tree structure and locking at a leaf
granularity. Furthermore, compression does not affect the theoretical
performance of concurrent PMAs because the CPMA also supports single-pass
operations within leaves. The CPMA supports multiple readers because reads are
non-modifying.

Finally, the batch-update algorithm in the CPMA is
identical to the batch-update algorithm for PMAs described
in~\secref{batch-updates}. The design and analysis of the batch-update algorithm
also depends only on single-pass operations on leaves.

\subsection*{Scalability analysis}
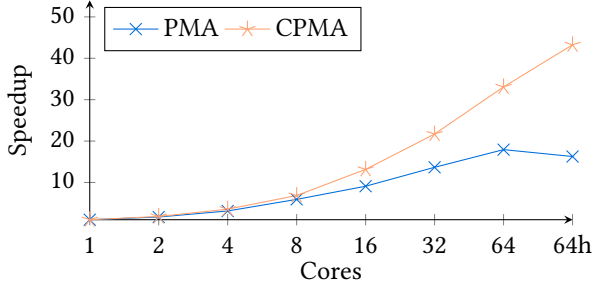
\begin{figure}
  \centering
  \begin{tikzpicture}
    \begin{axis}[
        width=8cm, height=4.5cm,
        axis lines = left,
        xlabel = Cores,
        ylabel style={align=center},
        ylabel = {Speedup},
        cycle list name=exotic,
        legend pos=north west,
        xmode=log,
        log basis x={2},
        legend columns=2,
        ymax=54,
        ytick={10,20,30,40,50},
        log ticks with fixed point,
        xlabel shift={-6pt},
        xtick={1,2,4,8,16,32,64,128},
        xticklabels = {1,2,4,8,16,32,64,64h},
      ]

      \addplot[mark=x, safe-cerulean,mark options={scale=1.6}]
      coordinates{(1,1)(2,1.65)(4,3.11)(8,5.91)(16,9.09)(32,13.66)(64,17.93)(128,16.27)};
      \addlegendentry{PMA}

      \addplot[mark=star, safe-peach,mark options={scale=1.6}]
      coordinates{(1,1)(2,1.78)(4,3.54)(8,6.84)(16,13.16)(32,21.70)(64,33.09)(128,43.25)};
      \addlegendentry{CPMA}


    \end{axis}
  \end{tikzpicture}
  \caption{Scalability of batch inserts in the PMA/CPMA. We use 64 to denote all physical cores and 64h to denote all 128 hyperthreads.}
  \label{fig:strong_scaling}
  \vspace{.25cm}
\end{figure}







We measure the scalability of both the PMA and CPMA on batch inserts and range
queries using the setup described in~\secref{pma-eval}. In each experiment, the
PMA and CPMA start with 100 million elements. In each batch-insert experiment,
we add 100 batches of 1 million elements each.  In each range-query experiment,
we perform 100,000 range queries in parallel where each query is expected to
return about $1.5$ million elements.  We measure the effect of core count on
performance of the PMA/CPMA. The extended version of the paper contains the raw
data.

~\figref{strong_scaling} shows that the CPMA achieves better scalability than the
PMA on batch inserts because compression maximizes the CPMA's usage of available
memory bandwidth. The PMA achieves up to $19\times$ speedup and the CPMA
achieves up to $43\times$ speedup for batch inserts on \numcores cores
(\numthreads threads). The CPMA achieves better batch insert throughput compared
to the PMA when the number of cores is sufficiently large (at least 16). When
the number of cores is too small, the additional computational overhead from
compression outweighs the benefits of decreased memory traffic.

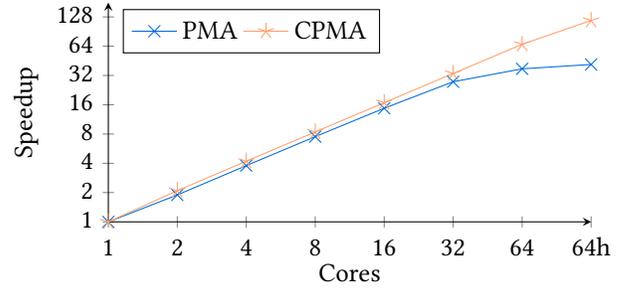
\begin{figure}
  \centering
  \begin{tikzpicture}
    \begin{axis}[
        width=8cm, height=4.5cm,
        axis lines = left,
        xlabel = Cores,
        ylabel style={align=center},
        ylabel = {Speedup},
        cycle list name=exotic,
        legend pos=north west,
        xmode=log,
        log basis x={2},
        ymode=log,
        log basis y={2},
        legend columns=2,
        ymax=180,
        ytick={1,2,4,8,16,32,64,128},
        log ticks with fixed point,
        xlabel shift={-6pt},
        xtick={1,2,4,8,16,32,64,128},
        xticklabels = {1,2,4,8,16,32,64,64h},
      ]

      \addplot[mark=x, safe-cerulean,mark options={scale=1.6}]
      coordinates{(1,1)(2,1.891538558)(4,3.784935965)(8,7.518593887)(16,14.72939785)(32,27.56712599)(64,37.43945548)(128,41.51270342)};
      \addlegendentry{PMA}

      \addplot[mark=star, safe-peach,mark options={scale=1.6}]
      coordinates{(1,1)(2,2.090915485)(4,4.201973781)(8,8.405794075)(16,16.79785831)(32,33.55162106)(64,66.91700144)(128,117.757703)};
      \addlegendentry{CPMA}


    \end{axis}
  \end{tikzpicture}
  \caption{Scalability of range queries in the PMA/CPMA. We use 64 to denote all physical cores and 64h to denote all 128 hyperthreads.}
  \label{fig:map_range_scaling}
   \vspace{.25cm}
\end{figure}

Similarly,~\figref{map_range_scaling}
demonstrates that the PMA achieves about $41\times$ speedup for range queries
and the CPMA achieves about $118\times$ speedup for range queries on \numcores
cores (\numthreads threads). The PMA's/CPMA's scalability on range queries is
much better than its scalability on updates because the queries proceed in
parallel and do not need to coordinate. The PMA's range query throughput in
terms of bytes transferred per second reaches the memory bandwidth on the
machine, but its overall range query throughput is limited because of the large
size per element.  The CPMA alleviates the memory bandwidth issue by decreasing
the size per element, enabling it to support more elements processed per byte
transferred.


\section{Evaluation}\label{sec:pma-eval}

To measure the improvements described in~\secreftwo{batch-updates}{cpma}, this
section evaluates the PMA/CPMA compared to uncompressed/compressed
\pactrees~\cite{DhulBlGu22} and \ptrees~\cite{sun2018pam} on range queries,
batch inserts, and space usage.  We use the terms ``U-PaC'' and ``C-PaC'' to
denote the uncompressed and compressed versions of \pactrees, respectively, in
this section.

This section then evaluates the CPMA, \cpac, and Aspen~\cite{DhulipalaBlSh19}, a
state-of-the-art dynamic-graph processing system based on compressed trees, on
an application benchmark of dynamic-graph processing because both PMAs and trees
appear frequently as dynamic-graph containers~\cite{ShaLiHe17,
  wheatman2018packed, DeLeoBo19, WheatmanXu21, de2021teseo,PandeyWhXu21,
  DhulipalaBlSh19, DhulBlGu22, wheatman2021streaming}. We introduce \system, a
system for processing dynamic graphs that uses the CPMA as its underlying data
structure.

Additional experiments and data tables can be found in an extended version of this paper~\cite{wheatman2023cpma}.

\paragraph{Microbenchmarks summary}
At a high level, the CPMA achieves the best of both worlds in terms of
performance. On average, it achieves \rangespeedupcpacavg faster range-query
throughput and \batchspeedup faster batch-insert throughput when compared to
compressed \pactrees. According to the theoretical prediction
in~\tabref{bounds-summary}, \pactrees asymptotically match or beat CPMAs for all
operations. However, in practice, the CPMA supports both fast queries and
updates due to its locality. Finally, CPMAs use about the same space as
compressed \pactrees, but they use less than half the space of uncompressed PMAs.
When compared with PAM, an uncompressed data structure, the uncompressed PMA
achieves \batchspeedupoverpam faster throughput for batch insertions and
\rangespeedupoverpamavg faster range query throughput.

\paragraph{Graph benchmark summary}
For graph workloads, we found that \system is on average \graphalgspeedupcpac
faster on a suite of graph algorithms, achieves \graphbatchspeedupcpac faster
throughput for batch updates, and uses marginally less space to store the graphs
compared to \cpac. Furthermore, \system is on average \graphalgspeedupaspen
faster on graph algorithms, achieves \graphbatchspeedupaspen faster throughput
for batch updates, and uses \spacesavingsoveraspen space to store the graphs
compared to Aspen.

\paragraph{Systems setup}
We implemented the PMA and CPMA as a \cpp library on top of the search-optimized
PMA~\cite{WheatmanXu23} and compiled them with \texttt{clang++-14}. To match the
parallelization method from the \pactrees library, we parallelized the PMA/CPMA
with the Parlaylib toolkit~\cite{blelloch2020parlaylib}.  The PMA and CPMA are
currently implemented as key stores (sets). The code can be found on 
\url{https://github.com/wheatman/Packed-Memory-Array.git}.

Each external library is compiled using the default configuration of
\texttt{g++-11} and Parlaylib (or PBBSlib, a precursor to Parlaylib, for Aspen)
for parallelization. They are each implemented in a \cpp library. We used the
in-place set mode of \ptrees and \pactrees for a fair comparison (although the
libraries also support a less efficient functional mode).  The \pactrees library block size is
set to the default for sets at 256, which corresponds to a maximum node size of
4108 bytes.
To initialize \pactrees, we used the library-provided recursive build
routine, which lays out the tree nodes non-contiguously in memory.

We also tested the Rewired PMA (RMA)~\cite{de2019packed} and compiled it with
the default provided scripts which used \texttt{clang++-14}. Since the RMA is
serial, there is no parallelization framework.

All experiments were run on a 64-core 2-way hyper-threaded Intel\rcircle
Xeon\rcircle Platinum 8375C CPU @ 2.90GHz with 256 GB of memory from
AWS~\cite{amazonaws}. Across all the cores, the machine has 3 MiB of L1 cache,
80 MiB of L2 cache, and 108 MiB of L3 cache. All performance results are the
average of 10 trials after a single warm up trial.

\subsection*{Evaluation on microbenchmarks}

We first evaluate the RMA, \ptrees, and \pactrees compared to the PMA/CPMA on a
suite of microbenchmarks.

\paragraph{Experimental setup}
We evaluate batch-update throughput first with $40$-bit uniform random numbers.  $40$-bit numbers gives a balance between the
compression ratio and the number of duplicates.  Uniform random is the
worst case for compressed data structures because it maximizes the deltas
between elements and therefore minimizes the compression ratio.  Uniform
random is also the worst case for batch inserts because it minimizes the
amount of shared work between updates that the algorithm can eliminate. However,
uniform random is the best case for 
redistributes in PMAs/CPMAs. 

We also evaluate batch-update throughput by
starting with $40$-bit uniform random numbers and then adding elements according
to a zipfian distribution. The zipfian distribution generates $34$-bit numbers
with skew parameter $\alpha = 0.99$ (parameter taken from the YCSB~\cite{ycsb}). For additional batch-insert experiments on skewed distributions, we test the
data structures on a skewed \rmat
distribution~\cite{ChakZhFa04} in the graph-processing application
benchmark at the end of this section.

We measure range-query performance of the data structure when it contains
100 million elements by performing 100,000 range queries in parallel. We
varied the size of the range queries across experiments.  We measure
batch-insert performance, by inserting 100
million elements in batches into a data structure that starts with 100
million elements. We varied the batch size across experiments.  If the batch-insert
performance was slower than the non-batched insert done in a loop, the
non-batched insert number was reported. 

To measure the space usage, we vary the
number of elements and report the size.

Finally, we evaluate the serial batch-update algorithm from the Rewired PMA
(RMA)~\cite{de2019packed} with the provided test code and build scripts. For a
fair comparison, we ran the batch update algorithm for PMAs
from~\secref{batch-updates} on one core.

The RMA's provided tests use the numbers $[1, 2, \ldots, n]$ sampled without replacement where $n$ is the total number of elements
after the test. Although this is not exactly the same set as numbers as in our PMA experiments (with uniform random 40-bit numbers), the experiments are equivalent because both data structures are uncompressed, so only the ordering of the numbers matters.





\paragraph{Batch inserts on uniform random inputs}
~\figref{batch-micro} demonstrates that the throughput of parallel batch inserts
in the CPMA is on average \batchspeedup faster than in compressed
\pactrees. Similarly, parallel batch inserts in the PMA achieve on average
\batchspeedupoverpam faster throughput than in \ptrees.  The PMA's/CPMA's
cache-friendliness enables it to support faster updates than the theory
suggests.  As mentioned in~\secref{pma-prelim}, PMAs (and by extension, CPMAs)
support point updates in $O((\log^2(n))/B + \log(n))$ work. Trees theoretically
dominate PMAs for point updates: balanced binary trees support updates in
$O(1+ \log(n))$ work~\cite{CLRS}, and cache-friendly trees such as
B-trees~\cite{BayerMc72} support updates in $O(1+\log_B(n))$ work. However, in
practice, batch updates in a PMA/CPMA are faster than batch updates in trees
because the PMA/CPMA takes advantage of contiguous memory access.

\begin{table}[t]
  \begin{center}
    \begin{tabular}{@{}crrr@{}}
      \hline
      \textit{Batch size} & \textit{RMA}~\cite{de2019packed} & \textit{PMA} & \textit{PMA/RMA} \\
      \hline
      1---1E4             & 1.7E6                            & 2.2E6        & 1.3              \\
      1E5                 & 2.0E6                            & 2.4E6        & 1.2              \\
      1E6                 & 2.5E6                            & 3.2E6        & 1.3              \\
      1E7                 & 5.4E6                            & 6.5E6        & 1.2              \\
      \hline
    \end{tabular}
    \caption{Serial batch insert throughput (inserts per second) of the
      uncompressed PMA and RMA. We use point insertions for small
      batches when the batch update algorithm does not provide practical
      benefits.}
    \label{tab:rma-batch}
  \end{center}
\end{table} ~\tabref{rma-batch} evaluates the batch-insert
algorithm for uncompressed PMAs from~\secref{batch-updates} on one core compared
to the existing serial batch-insert algorithm for RMAs, an optimized version of
PMAs~\cite{de2019packed}. On average, the batch-insert algorithm in this paper
is about $1.2\times$ faster than the existing batch-insert algorithm for RMAs.

\paragraph{Batch inserts on skewed inputs}
Just as in the uniform random case, the CPMA outperforms C-PaC on small batches
and is slightly slower on large batches of skewed inserts.

The batch-parallel PMA is well-suited for the
case of all insertions targeting the same leaf. In contrast, for non batched PMAs, this is the worst case. The batch-insert PMA mitigates the worst case by (1)
sharing the work of searches between inserts, reducing overall work, and (2)
skipping levels of redistribution with larger batches, improving overall work
and parallelism. Due to these factors, the PMA/CPMA achieves higher throughput on
zipfian batch inserts compared to uniform random batch inserts as can be seen in \tabref{batch-ins-del}.

\paragraph{Batch deletes} On average, the PMA performs uniform random batch
deletions $1.9\times$ faster than uniform random batch insertions. Similarly,
the CPMA achieves $1.5\times$ higher throughput for uniform random batch
deletions compared to uniform random batch insertions, on average as can be seen in \tabref{batch-ins-del}. We see a
similar trend for the zipfian distribution. Batch deletions are faster than
batch insertions when the batch is large because deletes do not have to allocate
temporary space as they will never overflow the PMA leaves.

\begin{table*}[!ht]
  \begin{center}
    \setlength{\tabcolsep}{4pt}
 \resizebox{\textwidth}{!}{%
    \begin{tabular}{@{}crrrrrrrrrrrrrrrrrrrrr@{}}
      \hline
      & \multicolumn{10}{c}{\textit{Uniform}} &  &  \multicolumn{10}{c}{\textit{Zipfian}}
      \\
      \cmidrule{2-11}  \cmidrule{13-22}
      & \multicolumn{3}{c}{\textit{PMA}} & \phantom{a}      &
                                                              \multicolumn{3}{c}{\textit{CPMA}}  & \phantom{a} &  \multicolumn{2}{c}{\textit{CPMA/PMA}}
                                                            &  \phantom{a} & \multicolumn{3}{c}{\textit{PMA}} & \phantom{a}      &
                                                                                                                                   \multicolumn{3}{c}{\textit{CPMA}}
                                                                           & \phantom{a}      & \multicolumn{2}{c}{\textit{CPMA/PMA}}\\
      \cmidrule{2-4} \cmidrule{6-8}   \cmidrule{10-11} \cmidrule{13-15}
      \cmidrule{17-19} \cmidrule{21-22}
      \textit{Batch size} & \textit{Insert} & \textit{Delete} & \textit{D/I} & &
                                                                                 \textit{Insert}
                                                            &
                                                              \textit{Delete} &
                                                                                \textit{D/I}
                                                                                                               & &
                                                                                                                   \textit{Insert}
                                                                           &
                                                                             \textit{Delete} & &  \textit{Insert} & \textit{Delete} & \textit{D/I} & &
                                                                                                                                                       \textit{Insert}
                                                              &
                                                                \textit{Delete}
                                                                             &
                                                                               \textit{D/I}
                                                                               & \phantom{a} & \textit{Insert}
                                                                              &
                                                                                \textit{Delete}
      \\
      \hline
      1E1&1.8E6&1.8E6&1.0&&1.4E6&1.7E6&1.2&&0.8&0.9&&3.4E6&4.0E6&1.2&&2.7E6&3.6E6&1.3&&0.8&0.9\\
1E2&3.0E6&3.9E6&1.3&&2.6E6&3.2E6&1.2&&0.9&0.8&&3.6E6&4.2E6&1.2&&3.2E6&3.4E6&1.1&&0.9&0.8\\
1E3&9.0E6&1.3E7&1.5&&9.7E6&1.2E7&1.2&&1.1&0.9&&1.0E7&1.2E7&1.2&&1.1E7&1.2E7&1.1&&1.1&1.0\\
1E4&2.5E7&5.6E7&2.2&&3.3E7&5.1E7&1.5&&1.3&0.9&&2.7E7&3.5E7&1.3&&3.2E7&3.8E7&1.2&&1.2&1.1\\
1E5&4.1E7&8.6E7&2.1&&4.8E7&7.5E7&1.6&&1.2&0.9&&4.4E7&6.6E7&1.5&&7.2E7&8.7E7&1.2&&1.6&1.3\\
1E6&7.0E7&1.7E8&2.4&&1.1E8&1.7E8&1.6&&1.5&1.0&&7.8E7&1.4E8&1.8&&1.7E8&2.2E8&1.3&&2.2&1.6\\
1E7&1.0E8&4.0E8&3.9&&2.4E8&4.7E8&2.0&&2.3&1.2&&1.1E8&1.5E8&1.4&&3.1E8&4.6E8&1.5&&2.9&3.0\\
      \hline
    \end{tabular}
    }
    \caption{Parallel batch inserts and deletes (updates per second) for uniform and zipfian distribution for the PMA and CPMA.}
    \label{tab:batch-ins-del}
  \end{center}
\end{table*}

\paragraph{Range queries}
~\figref{map-range-micro} shows that the CPMA supports range queries between
\rangespeedupcpac faster than compressed \pactrees. Similarly, the PMA supports
range queries between \rangespeedupoverpam faster than \ptrees.  The PMA/CPMA is
faster to scan than compressed \pactrees because the PMA's/CPMA's contiguous
layout enables prefetching, while trees require pointer-chasing between tree
nodes. Furthermore, for small ranges, the PMA/CPMA are at least
\minsmallrangespeedup faster due to the pre-existing search layout optimizations
for PMAs, which are orthogonal to the optimizations in this
paper~\cite{WheatmanXu23}.

Furthermore, the CPMA supports range queries $1.3\times$ faster than
the PMA on the largest range because the CPMA's smaller size enables it to fetch
more elements before reaching memory bandwidth. However, the PMA is faster for
small range queries because of the added overhead of decompression in the CPMA.

\begin{table}[t]
  \begin{center}
    \resizebox{\columnwidth}{!}{%
    \begin{tabular}{@{}crrrrrrr@{}}
      \hline
      \begin{tabular}{@{}c@{}}\textit{Num.} \\ \textit{Elts.}  \end{tabular} & \textit{U-PaC} & \textit{PMA}   &
      \begin{tabular}{@{}c@{}}\textit{\underline{PMA}} \\ \textit{U-PaC}  \end{tabular}
                             &                \textit{C-PaC} & \textit{CPMA} & \begin{tabular}{@{}c@{}}\textit{\underline{CPMA}} \\ \textit{C-PaC}  \end{tabular} &
      \begin{tabular}{@{}c@{}}\textit{\underline{CPMA}} \\ \textit{PMA}  \end{tabular}
      \\
      \hline
      1E6                    & 8.07           & 11.82          & 1.46          &                      4.23 & 4.77 & 1.13 &  0.40 \\
      1E7                    & 8.12           & 10.51          & 1.30          &                      4.01 & 4.25 & 1.06  & 0.40 \\
      1E8                    & 8.09           & 11.36          & 1.40          &                      3.34 & 3.16 & 0.95  & 0.28 \\
      1E9                    & 8.07           & 9.89          & 1.23          &                      2.99 & 2.81 & 0.94  & 0.28 \\
      \hline
    \end{tabular}
    }
    \caption{Bytes per element in each of the data structures and compression ratios. The
      \texttt{sizeof(T)} is 8 bytes.}
    \label{tab:size-micro}
  \end{center}
\end{table}


\paragraph{Space usage}
~\tabref{size-micro} shows that CPMAs are similar in size to \cpac and are over
$2\times$ smaller than uncompressed PMAs.  The space savings of the compressed
data structures improves with the number of elements because the distance
between elements decreases as the number of elements increases. The CPMA uses
more space than \cpac for smaller inputs but less space than \cpac when the
input is sufficiently large (at least 100M elements) because the CPMA leaf size,
which defines the ratio of uncompressed to compressed elements, grows with the
number of elements.  As an uncompressed data structure, \ptrees take a fixed 32
bytes per element.



\subsection*{Evaluation on graph workloads}\label{sec:graph-eval}
We use the CPMA as the basis for a dynamic-graph container called \system and evaluate it on a suite of dynamic-graph workloads as an application benchmark for the CPMA.  
We first describe how \system processes dynamic graphs with a single
CPMA.  Then we present the results of the benchmark for \system, \cpac, and Aspen.




\paragraph{\system description} \system is built on a single batch-parallel CPMA
with delta compression and byte codes. It differs from traditional graph
representations because it uses only a single array to store both the vertex and
edge data.

To understand the distinction, consider the canonical Compressed Sparse
Row (CSR)~\cite{TinneyWa67} representation. For unweighted graphs, CSR uses two
arrays: an \defn{edge array} to store the edges in sorted order (by source and
then by destination), and a \defn{vertex array} to store offsets into the edge
array corresponding to the start of each vertex's neighbor list. The vertex
array saves space: the edge array then only needs to store destinations and not
sources.

In contrast, storing graphs in a CPMA takes only one array. Using a
CPMA, 
\system stores edges in 64-bit words by representing the source in the upper 32
bits and the destination in the lower 32 bits\footnote{ All of the tested graphs
  have fewer than $2^{32}$ vertices, so the edges fit in 64-bit words. If there
  are more than $2^{32}$ vertices, we can concatenate two 64-bit words to store
  each edge.}. The start of each vertex's neighbors is implicit and can be
restored with a search into the underlying CPMA. The delta compression in the
CPMA elides out the source vertex in all edges except for the edges in the
uncompressed PMA leaf heads and the first edge of each vertex.

\system supports batch updates and graph algorithms by adopting the popular
approach of phasing updates and algorithms separately~\cite{AmmarMcSa18,
  BusatoGrBo18, CaiLoSi12, EdgigerMcRi12, FengMeAm15, GreenBa16, MurrayMcIs16,
  shan2017accelerating, SenguptaSo17, SenguptaSuZh16, SuzuNiGa14, VoraGuXu17,
  WinterZaSt17}. It supports batch updates with one writer and therefore does
not use locks.

Finally, \system currently supports unweighted graphs because the CPMA is
currently a key store. \system also currently supports algorithms on undirected
graphs because it is built on a single CPMA, but it could be easily extended to
support algorithms on directed graphs with two CPMAs --- one for incoming edges
and one for outgoing edges\footnote{Since \system stores source/destination
  pairs, it can store directed graphs. However, many parallel graph algorithms
  require looping over both incoming and outgoing neighbor sets
  efficiently.}. Many graph algorithms (e.g., all the ones in this paper, among
others) can be run with only the graph topology. Future work includes extending
the CPMA to a key-value store which would allow \system to store weighted
graphs.

The CPMA under \system has a growing factor of $1.2\times$.

\paragraph{\othersystem and Aspen description} \othersystem and Aspen support dynamic-graph
processing with compressed trees (one per vertex) and enable concurrent
updates and graph algorithms without locking in functional mode. Since we are
not concurrently performing updates and algorithms, we use \cpac's and Aspen's in-place
unweighted modes for a fair comparison.

\paragraph{Systems setup}
All systems run the same algorithms via the Ligra interface, which is based on
the VertexSubset/EdgeMap abstraction~\cite{ShunBl13}.  Therefore, all algorithms
implemented with \othersystem and Aspen can be run on top of \system with minor
syntatic changes~\cite{ShunRoFo12, DhulBlSh18, DhulBlSh17}.

\paragraph{Datasets} 
\tabref{memory-footprints} lists the graphs used in the evaluation and their
sizes.  We tested on real social network graphs and a synthetic graph.  We used
a few social network graphs of various sizes: the \defn{LiveJournal}
(LJ)~\cite{LJ}, the \defn{Community Orkut} (CO)\cite{orkut}, the \defn{Twitter}
(TW)~\cite{BeamAsPa15}, and \defn{Friendster} (FS)~\cite{snapnets} graphs.
Additionally, we generated an \defn{Erd\H{o}s-R\'enyi} (ER) \\graph~\cite{erdos59a}
with $n = 10^7$ and $p = 5\cdot 10^{-6}$.

\subsubsection*{Graph algorithms}

\begin{figure}
        \centering
        \begin{tikzpicture}
                \begin{axis}[
                                ybar,
                                legend pos=north east,
                                ylabel={Speedup over \cpac},
                                symbolic x coords={PR, CC, BC, Average},
                                xtick=data,
                                legend columns=4,
                                width=8cm, height=4.5cm,
                                ymin=0,
                                ymax=1.7,
                                extra y ticks=1,
                                enlarge x limits=0.12,
                                extra y tick style={
                                                ymajorgrids=true,
                                                ytick style={/pgfplots/major tick length=0pt,},
                                                grid style={violet ,dashed,},
                                        },
                                every axis plot/.append style={fill},
                        ]

                        \foreach \a in {C-PaC, Aspen, F-Graph}{
                                        \addplot table [
                                                        discard if not={structure}{\a},
                                                        discard if not={graph}{Average},
                                                        x=kernel, y expr=1/\thisrow{normalized to CPaC}, col sep=tab
                                                ] {csvs/algorithm_times.tsv};

                                        \expandafter\addlegendentry\expandafter{\a}
                                }
                \end{axis}
        \end{tikzpicture}
         \vspace{-.5cm}
        \caption{Relative speedup of graph algorithms over \othersystem}
        \label{fig:intro-algs}
\end{figure}
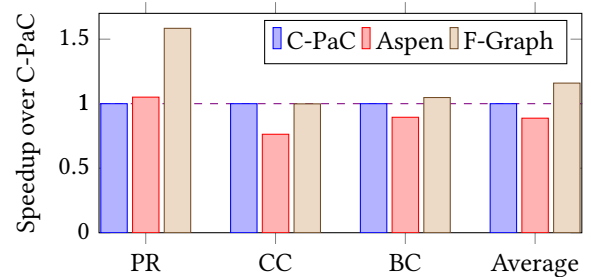

\begin{figure}
        \centering
        \begin{tikzpicture}
                \begin{axis}[
                                width=8cm, height=4.5cm,
                                axis lines = left,
                                legend columns=3,
                                xlabel = Batch size,
                                ylabel style={align=center},
                                ylabel = {Throughput \\(inserts / second)},
                                xmode=log,
                                ymode=log,
                                ymin=25000,
                                ymax=1000000000,
                                cycle list name=exotic,
                                legend pos=south east,
                                ytick={100000,1000000,10000000,100000000,1000000000},
                                xlabel shift={-6pt},
                        ]

                        \addplot[mark=star, safe-peach,mark options={scale=1.6}]
                        coordinates{(1E+1,3.7E+5)(1E+2,1.4E+6)(1E+3,6.2E+6)(1E+4,1.3E+7)(1E+5,2.2E+7)(1E+6,8.2E+7)(1E+7,2.1E+8)(1E+8,4.7E+8)};
                        \addlegendentry{\system}

                        \addplot[safe-pink, mark=square*]
                        coordinates{(10,146092.038)(100,1262626.263)(1000,3883495.146)(10000,4306632.214)(100000,10992634.93)(1000000,35688793.72)(10000000,70422535.21)(100000000,185322461.1)};
                        \addlegendentry{\othersystem}

                        \addplot[safe-teal, mark=diamond*]
                        coordinates{(10,1.10E+05)(100,8.01E+05)(1000,3.98E+06)(10000,6.16E+06)(100000,1.63E+07)(1000000,3.02E+07)(10000000,6.92E+07)(100000000,2.02E+08)};
                        \addlegendentry{Aspen}

                \end{axis}
        \end{tikzpicture}
         \vspace{-.5cm}
        \caption{Insert throughput as a function of batch size on the FS graph.}
        \label{fig:intro-insert}
\end{figure}
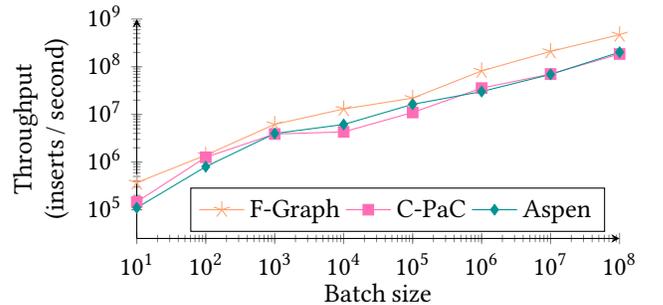


We evaluate the performance of \system, \cpac, and Aspen on three fundamental
graph algorithms: PageRank\cite{brin1998anatomy}\footnote{The PR implementation runs for a fixed
  number (10) of iterations.} (PR), connected components (CC), and single-source betweenness
centrality (BC).  ~\figref{intro-algs} presents the results of the evaluation,
and the full version of the paper contains all of the data. The algorithms are
from the Ligra~\cite{ShunBl13} distribution with minor cosmetic changes.  On
average, \system supports graph algorithms \graphalgspeedupcpac faster than
\cpac and \graphalgspeedupaspen faster than Aspen because \system stores the
graph contiguously in memory.

Traversals in graph kernels can be organized on a continuum depending on how
many long scans they contain, which depends on the order of vertices
accessed. On one extreme, \defn{arbitrary-order} algorithms such as PR access
vertices in any order and can be cast as a straightforward pass through the data
structure. On the other extreme, \defn{topology-order} algorithms such as BC
access vertices depending on the graph topology, and are therefore more likely
to incur cache misses by accessing a random vertex's neighbors. CC is in between
arbitrary order and topology order because it starts with large scans in the
beginning of the algorithm, but it converges to smaller scans as fewer vertices
remain under consideration.

Systems with a flat layout such as \system have an advantage when the algorithm
is closer to arbitrary order --- they support fast scans of neighbors because
all of the data is stored contiguously. For example, \system is $1.5\times$
faster than \cpac on average on PR. In contrast tree-based systems such as \cpac
incur more cache misses during large scans due to pointer chasing.

Since \system uses a single edge array in its flat layout, it must incur a fixed
cost to reconstruct the vertex array of offsets in all algorithms besides PR
(because PR accesses all of the edges in each iteration). The relative cost of
building the vertex array in \system compared to the cost of the algorithm
depends on the amount of other work in the algorithm.
For example, building the vertex array in \system takes about $10\%$ of the
total time in BC. The relative cost of building the offset array also depends on
the average degree: a higher average degree corresponds to a smaller overhead
compared to the cost of the algorithm. Finally, although this experiment
rebuilds the vertex array with each run of the algorithm, the vertex array could
be reused across computations (e.g., from different sources) if there have been
no updates.

\balance

\subsubsection*{Update throughput}
\system does not sacrifice updatability for its improved algorithm speed --- on
average, \system is $2\times$ faster than \othersystem and Aspen on batch
inserts.  ~\figref{intro-insert} shows that \system achieves faster updates than
\othersystem and Aspen despite the theoretical dominance of trees over PMAs in
terms of point and batch updates.

To evaluate insertion throughput, we first insert all edges from the FS graph
(the largest graph we tested on). We then add a new batch of directed edges
(with potential duplicates) to the existing graph in both systems.  To generate
edges for inserts, we sample directed edges from an \rmat
generator~\cite{ChakZhFa04} (with $a = 0.5$; $b = c = 0.1$; $d = 0.3$ to match the
distribution from the PaC-tree paper~\cite{DhulBlGu22}).

We note that the distribution of inserts is different here than in Section
\ref{sec:pma-eval}.  Here we see that even with a skewed distribution, while
traditionally challenging for PMAs, the batch parallel CPMA achieves good insert
throughput due to the work sharing in the batch insert algorithm.

\subsubsection*{Space usage}

Finally, we consider the space usage of \system, \othersystem, and
Aspen. ~\tabref{memory-footprints} shows that \system uses marginally less space
than \cpac and about \spacesavingsoveraspen the space that Aspen uses because
\system collocates small neighbor sets by using only one array to store all of
the data (rather than two levels of trees for vertices and edges in \othersystem
and Aspen).

\begin{table}[t]
  \centering
  \setlength{\tabcolsep}{4pt}
  \resizebox{\columnwidth}{!}{%
  \begin{tabular}{@{}lrrrrrrrr@{}}\toprule
    \textit{Graph} & \textit{N} & \textit{M} & \textit{\system} & \textit{\othersystem} & \textit{Aspen} &&  \textit{F/C} & \textit{F/A} \\
    \midrule
LJ&4.8&86&0.24&0.35&0.58&&0.69&0.41\\
CO&3.1&234&0.73&0.73&0.89&&1.00&0.82\\
ER&10&1000&3.74&3.80&5.17&&0.98&0.72\\
TW&62&2405&7.63&8.92&12.4&&0.86&0.62\\
FS&125&3612&13.21&14.99&22.76&&0.88&0.54\\
    \bottomrule
  \end{tabular}%
  }
  \caption{Graph sizes (N = number of vertices, M = number of edges, all in millions) and the memory used to store the graphs in all of the systems in Gigabytes. A number below 1 in the F/C or F/A column means that \system was smaller.}
  \label{tab:memory-footprints}
\end{table}


\section{Conclusion}\label{sec:conclusion}

This paper optimizes traditional PMAs with parallel batch updates and data
compression.  On average, the compressed PMA (CPMA) outperforms compressed trees (\pactrees)
by \batchspeedup on parallel batch
updates and \rangespeedupcpacavg on range queries due to the CPMA's cache-friendliness. The CPMA uses similar space compared to
compressed \pactrees and uses $2\times-3\times$ less space compared to
uncompressed representations. Compression enables the CPMA to scale better with
the number of cores compared to the PMA because its smaller size mitigates
memory bandwidth issues with reduced memory traffic.

To further demonstrate the real-world applicability of the CPMA, we introduce
\system, a dynamic-graph-processing system built on a single CPMA, and
compared it to \cpac, a state-of-the-art dynamic-graph-processing system built
on compressed \pactrees. We found that \system is \graphalgspeedupcpac faster on
graph algorithms, \graphbatchspeedupcpac faster on batch updates, and slightly
smaller when compared to \othersystem.

The empirical advantage of the CPMA over compressed \pactrees demonstrates the
importance of optimizing parallel data structures for the memory
subsystem. Specifically, the CPMA's array-based layout enables it to take
advantage of the speed of contiguous memory accesses. Despite the theoretical
prediction, the batch-parallel CPMA empirically overcomes the update/scan
tradeoff with compressed \pactrees due to its locality.

\clearpage
\bibliographystyle{ACM-Reference-Format}
\bibliography{sample}

\clearpage
\appendix
\section{Artifact Instructions}

This section summarizes how to download and use the code. The full details
(including how to compile the original binaries and reproduce the experiments in
the paper) can be found at the top level directory a pdf called "CPMA artifact readme" in both the git repo and the Zenodo (at
\url{https://zenodo.org/records/10222939}).

\paragraph{Machine specs}
Please use a machine with preinstalled \texttt{g++} (at least version 11) and
\texttt{git}. We have tested the artifact on an Amazon \texttt{c6i.metal} instance
running Ubuntu 20.04 with 128 threads and 256 GB of memory) and \texttt{g++} 11.4.

To run PAM/CPAM, you will also need jemalloc.

To make the plots, you will need python with matplotlib.

The test machine should have multiple threads but does not necessarily need 128
threads. In terms of memory, the known minimum necessary to run the graph
evaluation is 118 GB. This amount of memory is needed to run on the largest
graph we tested (Friendster).

The code should compile and run on non x86 machines, but the performance was
only tested on the machine above.

\paragraph{Get the code}
To get the code via git, clone the repo, go to the \texttt{for\_artifact}
branch, and set up the submodules:

\begin{lstlisting}[language=bash]
git clone https://github.com/wheatman/Packed-Memory-Array.git
cd Packed-Memory-Array
git checkout for_artifact
git submodule init
git submodule update
\end{lstlisting}

\begin{table*}
\centering
\small
\begin{tabular}{ ll }
\hline
 Description & Scripts \\ 
 \hline
 PMA/CPMA uniform batch inserts (all threads) & \texttt{run-fig-1.sh} \\  
 PMA/CPMA uniform range queries (all threads) & \texttt{run-fig-2.sh}   \\
  PMA/CPMA uniform batch inserts (serial) & \texttt{run-table-2.sh} (assuming you did the parallel ones via \texttt{run-fig-1.sh}) \\ 
  PMA/CPMA batch insert scalability (strong scaling) & \texttt{run-serial-fig-7.sh, sh run-parallel-fig-7.sh} \\
    PMA/CPMA range query scalability (strong scaling) & \texttt{run-serial-fig-8.sh, sh run-parallel-fig-8.sh} \\
  PMA/CPMA memory footprint on uniform dataset & \texttt{run-table-4.sh} \\
  CPMA graph evaluation & \texttt{run-graph-eval.sh} \\
 \hline
\end{tabular}
\caption{PMA/CPMA experiments in the paper and their associated scripts.}
\vspace{-.5cm}
\label{tab:artifact-map}
\end{table*}

\subsection*{PMA/CPMA API}

To use the PMA/CPMA for other purposes, follow the instructions from ``Get the
code'' above and add \texttt{\#include "CPMA.h"} to the top of your main test
driver.

The PMA/CPMA supports the following API:

\begin{itemize}
\item \texttt{uint64\_t size()}: Return the number of elements being stored in
  the PMA.
\item \texttt{CPMA()}: Construct an empty CPMA.
\item \texttt{CPMA(key\_type *start, key\_type *end)}: Construct a CPMA with the
  elements in the given range.
\item \texttt{bool has(key\_type e)}: Return true if the key \texttt{e} is in
  the PMA.
\item \texttt{bool insert(element\_type e)}: inserts the element \texttt{e} into
  the PMA, returns false if the key was already there.
\item \texttt{uint64\_t insert\_batch(element\_ptr\_type e, \\ uint64\_t
    batch\_size, bool sorted = false)}: Inserts a batch of elements of size
  \texttt{batch\_size}. Returns the number of elements added (not counting the
  ones that were already in the data structure).
\item \texttt{uint64\_t remove\_batch(key\_type *e, uint64\_t batch\_size, bool
    sorted = false)}: Removes a batch of elements of size \texttt{batch\_size}.
  Returns the number of elements removed (not counting the ones that were not in
  the data structure).
\item \texttt{bool remove(key\_type e)}: Removes the element with key \texttt{e}.
\item \texttt{uint64\_t get\_size()}: Returns the amount of memory (in bytes)
  used by the PMA.
\item \texttt{uint64\_t sum()}: Returns the sum of all elements in the PMA.
\item \texttt{key\_type max() / min()}: Returns the smallest or largest key
  stored in the PMA.
\item \texttt{bool map(F f)}: Runs function \texttt{f} on all elements in the
  PMA.
\item \texttt{parallel\_map(F f)}: Runs function \texttt{f} on all elements in
  the PMA in parallel.
\item \texttt{bool map\_range(F f, key\_type start\_key, key\_type end\_key)}:
  Runs function \texttt{f} on all elements with keys between \texttt{start\_key}
  and \texttt{end\_key}.
\item \texttt{uint64\_t map\_range\_length(F f, key\_type start, uint64\_t
    length)}: Runs function \texttt{f} on at most \texttt{length} elements
  starting from key at least \texttt{start}.
\item The PMA also supports iteration as it has begin and end functions, so you
  can perform operations like \texttt{for (auto el : pma)}. Note that this may
  be slower than using the map functions.
\end{itemize}

\subsection*{Relationship between scripts and data}

All of the scripts to run the PMA/CPMA are in the main \texttt{Packed-Memory-Array/} folder under \texttt{scripts}. \tabref{artifact-map} lists the type of PMA/CPMA experiment and the associated script to run it. The full documentation also includes instructions about how to run the other systems (PAM, U-PaC, C-PaC, Aspen).


\clearpage
\section{Data tables}\label{app:extra-data}

This section contains the data used to generate the plots
in~\secrefthree{intro}{cpma}{pma-eval}. The growing factor in the PMA/CPMA in the
microbenchmarks is $1.2\times$.  The \defn{growing factor} is the amount by
which the underlying array in the PMA grows when it becomes too dense. The
asymptotic bounds of the PMA still hold as long as the growing factor is a
constant greater than 1.  We chose the growing factor based on the
microbenchmarks in~\secref{growing-appendix}.

\subsection*{\secref{intro}}
\FloatBarrier

\tabref{batch-micro-table} contains the data used to
generate~\figref{batch-micro}, and~\tabref{range-micro-table} contains the data
used to generate~\figref{map-range-micro}.~\tabref{intro-cache-misses} reports
the cache misses of each data structure mentioned in~\secref{intro}.

\begin{table*}[!ht]
  \begin{center}
    \begin{tabular}{@{}crrrrrrrrr@{}}
      \hline
     \begin{tabular}{@{}c@{}}\textit{Batch} \\ \textit{size}  \end{tabular}& \textit{\ptrees} & \textit{U-PaC} &
                                             \textit{PMA}
        & \begin{tabular}{@{}c@{}}\textit{\underline{PMA}} \\ \textit{\ptrees}  \end{tabular}
      & \begin{tabular}{@{}c@{}}\textit{\underline{PMA}} \\ \textit{U-PaC}  \end{tabular}
      &              \textit{\cpac}
      & \textit{CPMA} & \begin{tabular}{@{}c@{}}\textit{\underline{CPMA}} \\ \textit{\cpac}  \end{tabular}   & \begin{tabular}{@{}c@{}}\textit{\underline{CPMA}} \\ \textit{PMA} \end{tabular}                                \\
      \hline
1E1&3.4E5&1.9E5&1.8E6&5.2&9.3&3.0E5&1.4E6&4.7&0.8\\
1E2&2.2E6&2.4E6&3.0E6&1.4&1.3&1.8E6&2.6E6&1.5&0.9\\
1E3&9.7E6&4.8E6&9.0E6&0.9&1.9&3.4E6&9.7E6&2.8&1.1\\
1E4&1.7E7&5.9E6&2.5E7&1.5&4.3&4.2E6&3.3E7&7.9&1.3\\
1E5&3.4E7&1.1E7&4.1E7&1.2&3.7&7.2E6&4.8E7&6.7&1.2\\
1E6&5.0E7&6.1E7&7.0E7&1.4&1.2&4.1E7&1.1E8&2.6&1.5\\
1E7&8.5E7&3.5E8&1.0E8&1.2&0.3&2.7E8&2.4E8&0.9&2.3\\
      \hline
    \end{tabular}
    \caption{Parallel batch insertion throughput (inserts per second) on all
      cores in \ptrees, \pactrees, and the PMA/CPMA.}
    \label{tab:batch-micro-table}
  \end{center}
\end{table*}

\begin{table*}[!ht]
  \begin{center}
    \begin{tabular}{@{}crrrrrrrrr@{}}
      \hline
       \begin{tabular}{@{}c@{}}\textit{Avg.} \\ \textit{len.}  \end{tabular} &\textit{\ptrees}&\textit{U-PaC} &
                                                            \textit{PMA}
      &\begin{tabular}{@{}c@{}}\textit{\underline{PMA}} \\ \textit{\ptrees}  \end{tabular}& \begin{tabular}{@{}c@{}}\textit{\underline{PMA}} \\ \textit{U-PaC}  \end{tabular}       & \textit{\cpac} & \textit{CPMA}
      & \begin{tabular}{@{}c@{}}\textit{\underline{CPMA}} \\ \textit{\cpac}  \end{tabular}   & \begin{tabular}{@{}c@{}}\textit{\underline{CPMA}} \\ \textit{PMA} \end{tabular}                                   \\
      \hline
      6E0      &        1.9E8         & 2.0E8                & 1.7E9          &    8.9    & 8.3                  & 1.8E8  & 8.1E8  & 4.4  &  0.5 \\
      5E1      &        4.9E8         & 9.5E8                & 6.6E9          &    13.6   & 7.0                   & 8.5E8  & 5.1E9  & 6.0  &   0.8 \\
      4E2      &        6.0E8         & 2.1E9                & 1.3E10         &    21.8   & 6.1                    & 1.5E9  & 1.5E10 & 10.3 &   1.2 \\
      3E3      &        6.2E8         & 1.0E10               & 1.6E10         &    24.9   & 1.5                   & 7.0E9  & 2.2E10 & 3.1  &   1.4 \\
      2E4      &        6.5E8         & 1.6E10               & 1.7E10         &    26.7   & 1.1                    & 1.6E10 & 2.4E10 & 1.5  &   1.4 \\
      2E5      &        6.8E8         & 1.8E10               & 1.8E10         &    27.1   & 1.0                    & 1.9E10 & 2.4E10 & 1.3  &  1.3 \\
      2E6      &        6.9E8         & 1.9E10               & 1.9E10         &    27.4   & 1.0                   & 1.9E10 & 2.4E10 & 1.2  &  1.3 \\
      \hline
    \end{tabular}
    \caption{Range query throughput (elements per second) on all
      cores in \ptrees, \pactrees, and the PMA/CPMA.}
    \label{tab:range-micro-table}
  \end{center}
\end{table*}

\subsection*{\secref{cpma}}

\tabref{batch-insert-scaling-table} contains the data
for~\figref{strong_scaling}, and~\tabref{range-query-scaling-table} contains the
data for~\figref{map_range_scaling}.

\begin{table}[t]
  \centering
  \begin{tabular}{@{}lrrrrrr@{}}\toprule
    \textit{Cores} & \begin{tabular}{@{}c@{}}\textit{PMA} \\ \textit{throughput}  \end{tabular} & \begin{tabular}{@{}c@{}}\textit{PMA} \\ \textit{speedup}  \end{tabular}&  &\begin{tabular}{@{}c@{}}\textit{CPMA} \\ \textit{throughput}  \end{tabular} & \begin{tabular}{@{}c@{}}\textit{CPMA} \\ \textit{speedup}  \end{tabular} \\
    \midrule
1&3.0E6&1.0&&2.6E6&1.0\\
2&4.9E6&1.6&&4.6E6&1.8\\
4&9.2E6&3.1&&9.1E6&3.5\\
8&1.8E7&5.9&&1.8E7&6.8\\
16&2.7E7&9.1&&3.4E7&13.2\\
32&4.1E7&13.7&&5.6E7&21.7\\
64&5.3E7&17.9&&8.5E7&33.1\\
64h&4.8E7&16.3&&1.1E8&43.3\\
    \bottomrule
  \end{tabular}%
  \caption{Batch insert scalability as a function of the number of cores. We use
    64 to denote all physical cores and 64h to denote all 128 hyperthreads.}
  \label{tab:batch-insert-scaling-table}
\end{table}

\begin{table}
  \centering
  \begin{tabular}{@{}lrrrrrr@{}}\toprule
    \textit{Cores} & \begin{tabular}{@{}c@{}}\textit{PMA} \\ \textit{throughput}  \end{tabular} & \begin{tabular}{@{}c@{}}\textit{PMA} \\ \textit{speedup}  \end{tabular}&  &\begin{tabular}{@{}c@{}}\textit{CPMA} \\ \textit{throughput}  \end{tabular} & \begin{tabular}{@{}c@{}}\textit{CPMA} \\ \textit{speedup}  \end{tabular} \\
    \midrule
    1&4.5E8&1.0&&2.0E8&1.0\\
    2&8.6E8&1.9&&4.2E8&2.1\\
    4&1.7E9&3.8&&8.5E8&4.2\\
    8&3.4E9&7.5&&1.7E9&8.4\\
    16&6.7E9&14.7&&3.4E9&16.8\\
    32&1.3E10&27.6&&6.8E9&33.6\\
    64&1.7E10&37.4&&1.4E10&66.9\\
    64h&1.9E10&41.5&&2.4E10&117.8\\
    \bottomrule
  \end{tabular}%
  \caption{Range query scalability as a function of the number of cores. We use
    64 to denote all physical cores and 64h to denote all 128 hyperthreads.}
  \label{tab:range-query-scaling-table}
\end{table}


\subsection*{\secref{graph-eval}}

\paragraph{Batch inserts with zipfian distribution}

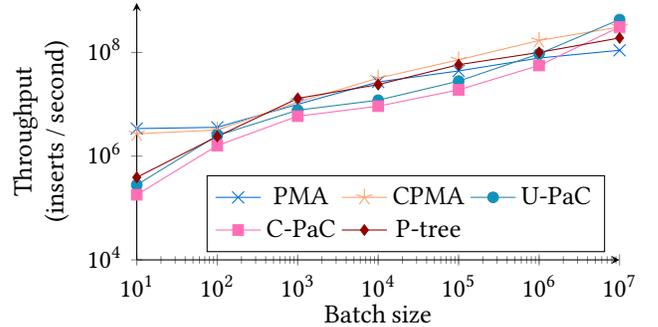
\begin{figure}
  \centering
  \begin{tikzpicture}
    \begin{axis}[
        width=8cm, height=5cm,
        axis lines = left,
        xlabel = Batch size,
        ylabel style={align=center},
        ylabel = {Throughput\\ (inserts / second)},
        xmode=log,
        ymode=log,
        ymin=10000,
        ymax=900000000,
        cycle list name=exotic,
        legend pos=south east,
        legend columns=3,
        xlabel shift={-6pt},
      ]

      \addplot[mark=x, safe-cerulean,mark options={scale=1.6}]
      coordinates{(10,3.4E+6)(100,3.6E+6)(1000,1.0E+7)(10000,2.7E+7)(100000,4.4E+7)(1000000,7.8E+7)(10000000,1.1E+8)};
      \addlegendentry{PMA}

      \addplot[mark=star, safe-peach,mark options={scale=1.6}]
      coordinates{(10,2.7E+6)(100,3.2E+6)(1000,1.1E+7)(10000,3.2E+7)(100000,7.2E+7)(1000000,1.7E+8)(10000000,3.1E+8)};
      \addlegendentry{CPMA}

      \addplot
      coordinates{(10,2.8E+5)(100,2.5E+6)(1000,7.7E+6)(10000,1.2E+7)(100000,2.8E+7)(1000000,9.4E+7)(10000000,4.3E+8)};
      \addlegendentry{U-PaC}

      \addplot[safe-pink, mark=square*]
      coordinates{(10,1.8E+5)(100,1.6E+6)(1000,5.9E+6)(10000,9.2E+6)(100000,1.9E+7)(1000000,5.6E+7)(10000000,3.1E+8)};
      \addlegendentry{C-PaC}

     \addplot[safe-brick, mark=diamond*]
      coordinates{(10,3.9E+5)(100,2.4E+6)(1000,1.3E+7)(10000,2.4E+7)(100000,5.8E+7)(1000000,1.0E+8)(10000000,1.9E+8)};
      \addlegendentry{\ptree}

    \end{axis}
  \end{tikzpicture}
  \vspace{-.8cm}
  \caption{Insert throughput as a function of batch size with batches generated
    from a zipfian distribution.}
  \label{fig:batch-micro-zipf}
\end{figure}

\begin{table*}
  \begin{center}
    \begin{tabular}{@{}crrrrrrrrr@{}}
      \hline
     \begin{tabular}{@{}c@{}}\textit{Batch} \\ \textit{size}  \end{tabular}& \textit{\ptrees} & \textit{U-PaC} &
                                             \textit{PMA}
        & \begin{tabular}{@{}c@{}}\textit{\underline{PMA}} \\ \textit{\ptrees}  \end{tabular}
      & \begin{tabular}{@{}c@{}}\textit{\underline{PMA}} \\ \textit{U-PaC}  \end{tabular}
      &              \textit{\cpac}
      & \textit{CPMA} & \begin{tabular}{@{}c@{}}\textit{\underline{CPMA}} \\ \textit{\cpac}  \end{tabular}   & \begin{tabular}{@{}c@{}}\textit{\underline{CPMA}} \\ \textit{PMA} \end{tabular}                                \\
      \hline
1.0E1&3.9E5&2.8E5&3.4E6&8.7&12.2&1.8E5&2.7E6&15.5&0.8\\
1.0E2&2.4E6&2.5E6&3.6E6&1.5&1.5&1.6E6&3.2E6&2.0&0.9\\
1.0E3&1.3E7&7.7E6&1.0E7&0.8&1.4&5.9E6&1.1E7&1.9&1.1\\
1.0E4&2.4E7&1.2E7&2.7E7&1.1&2.3&9.2E6&3.2E7&3.5&1.2\\
1.0E5&5.8E7&2.8E7&4.4E7&0.8&1.6&1.9E7&7.2E7&3.8&1.6\\
1.0E6&1.0E8&9.4E7&7.8E7&0.7&0.8&5.6E7&1.7E8&3.0&2.2\\
1.0E7&1.9E8&4.3E8&1.1E8&0.5&0.2&3.1E8&3.1E8&1.0&2.9\\
      \hline
    \end{tabular}
    \caption{Parallel batch insertion throughput (inserts per second) for
      inserts from a zipfian distribution on all
      cores in \ptrees, \pactrees, and the PMA/CPMA.}
    \label{tab:batch-micro-zipf}
  \end{center}
\end{table*}


~\figref{batch-micro-zipf} and~\tabref{batch-micro-zipf} contain the data for
zipfian batch inserts.



\paragraph{Evaluation on graph workloads}

\tabref{all-systems-kernels} contains the data for~\figref{intro-algs}, and
~\tabref{batch-graph-table} contains the data for~\figref{intro-insert}.

\begin{table*}
  \begin{center}
    \setlength{\tabcolsep}{4pt}
  \resizebox{\textwidth}{!}{%
    \begin{tabular}{@{}crrrrrrrrrrrrrrrrr@{}}
      \hline
                     & \multicolumn{5}{c}{\textit{PR}} & \phantom{a}      & \multicolumn{5}{c}{\textit{CC}} & \phantom{a} & \multicolumn{5}{c}{\textit{BC}}                                                                                         \\
      \cmidrule{2-6} \cmidrule{8-12} \cmidrule{14-18}
      \textit{Graph} & \textit{Aspen}   &  \textit{\othersystem}       & \textit{\system} & \begin{tabular}{@{}c@{}}\textit{\underline{Aspen}} \\ \textit{\system}  \end{tabular} & \begin{tabular}{@{}c@{}}\textit{\underline{\othersystem}} \\ \textit{\system}  \end{tabular}                   &             & \textit{Aspen}   &  \textit{\othersystem}       & \textit{\system} & \begin{tabular}{@{}c@{}}\textit{\underline{Aspen}} \\ \textit{\system}  \end{tabular} & \begin{tabular}{@{}c@{}}\textit{\underline{\othersystem}} \\ \textit{\system}  \end{tabular}  & & \textit{Aspen}   &  \textit{\othersystem}       & \textit{\system} & \begin{tabular}{@{}c@{}}\textit{\underline{Aspen}} \\ \textit{\system}  \end{tabular} & \begin{tabular}{@{}c@{}}\textit{\underline{\othersystem}} \\ \textit{\system}  \end{tabular} \\
      \hline
LJ&0.22&0.18&0.13&1.69&1.37&&0.07&0.05&0.08&0.91&0.57&&0.08&0.06&0.07&1.26&0.93\\
CO&0.35&0.39&0.27&1.31&1.46&&0.09&0.08&0.07&1.26&1.00&&0.09&0.07&0.07&1.26&0.98\\
ER&2.97&3.42&1.95&1.52&1.75&&1.02&0.54&0.40&2.52&1.34&&0.21&0.21&0.19&1.09&1.09\\
TW&9.52&10.77&6.98&1.36&1.54&&2.00&2.08&1.77&1.13&1.17&&1.07&1.12&1.18&0.91&0.95\\
FS&23.30&26.47&13.94&1.67&1.90&&4.97&5.66&3.81&1.31&1.49&&3.02&3.05&2.15&1.40&1.42\\
      \hline
    \end{tabular}
    }
    \caption{Running times (seconds) of Aspen, \othersystem, and  \system on PR, CC, and BC
      with all (\numcores) threads. A number above 1 in the ratio
      columns means that \system was faster.}
      \vspace{-.7cm}
    \label{tab:all-systems-kernels}
  \end{center}
\end{table*}

\begin{table}
  \begin{center}
    \begin{tabular}{@{}crrrrrr@{}}
      \hline
     \begin{tabular}{@{}c@{}}\textit{Batch} \\ \textit{size}  \end{tabular}& \textit{Aspen} & \textit{\othersystem} &
                                             \textit{\system}
        && \begin{tabular}{@{}c@{}}\textit{\underline{\system}} \\ \textit{Aspen}  \end{tabular}
      & \begin{tabular}{@{}c@{}}\textit{\underline{\system}} \\ \textit{\othersystem}  \end{tabular}
                                      \\
      \hline
1E1&1.10E05&1.5E5&3.7E5&&3.4&2.5\\
1E2&8.01E05&1.3E6&1.4E6&&1.7&1.1\\
1E3&3.98E06&3.9E6&6.2E6&&1.6&1.6\\
1E4&6.16E06&4.3E6&1.3E7&&2.2&3.1\\
1E5&1.63E07&1.1E7&2.2E7&&1.3&2.0\\
1E6&3.02E07&3.6E7&8.2E7&&2.7&2.3\\
1E7&6.92E07&7.0E7&2.1E8&&3.0&2.9\\
1E8&2.02E08&1.9E8&4.7E8&&2.3&2.5\\
      \hline
    \end{tabular}
    \caption{Parallel batch insertion throughput (inserts per second) on all
      cores in Aspen, \othersystem, and \system.  The base graph is the FS
      graph.  The new insertions are sampled from the \rmat distribution.}
    \label{tab:batch-graph-table}
  \end{center}
\end{table}


\clearpage

\section{Growing factor sensitivity}\label{sec:growing-appendix}

We evaluate how the space usage, batch insertion throughput, and scan
throughput of the CPMA changes with the growing factor.

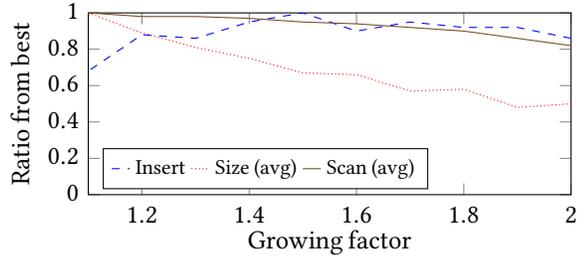
\begin{figure}[t]
\begin{tikzpicture}
  \begin{axis}[xlabel=Growing factor,ylabel=Ratio from best,
    width=8cm, height=4cm,
    legend columns=3,
    ymin=0,
    ymax=1,
    xmin=1.1,
    xmax=2,
    legend pos=south west,
    no marks,
        legend style={font=\footnotesize},
    legend image post style={scale=0.4},
    xlabel shift={-6pt},]
\addplot [dashed, blue] table [
    x=growing factor, y=insert time, col sep=tab
    ] {csvs/growing_factor_analysis.tsv};
    \addlegendentry{Insert}

\addplot [densely dotted, red] table [
    x=growing factor, y=average size, col sep=tab
    ] {csvs/growing_factor_analysis.tsv};
    \addlegendentry{Size (avg)}

\addplot table [
    x=growing factor, y=average sum time, col sep=tab
    ] {csvs/growing_factor_analysis.tsv};
    \addlegendentry{Scan (avg)}
\end{axis}
\end{tikzpicture}
\vspace{-.4cm}
\caption{Effect of growing factor on performance and size.}
\label{fig:growing_factor_overall}
\end{figure}

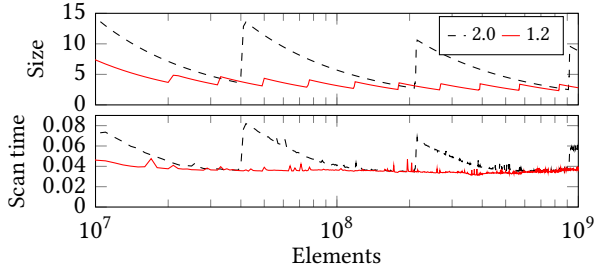
\begin{figure}[t]
\begin{tikzpicture}

    \begin{groupplot}[
        group style={
            group size=1 by 2,
            x descriptions at=edge bottom,
            vertical sep=4pt,
        },
    xmin=10000000,
    xmax=1000000000,
    xmode=log,
        legend pos=north east,
    no marks,
    label style={font=\small},
    tick label style={font=\scriptsize},
    scaled y ticks=false,
        max space between ticks=1000pt,
        try min ticks=4,
        width=8cm, height=2.8cm,
        xlabel shift={-6pt},
    ]

    \nextgroupplot[
  ylabel=Size,
  ylabel style={align=center},
    legend columns=2,
    ymin=0,
    ymax=15,
    legend pos=north east,
    no marks,
    legend style={font=\footnotesize},
    legend image post style={scale=0.4},
    ]
\addplot [dashed] table [
    x expr=(\thisrow{batch number}+1)*1000000, y=2.0 size, col sep=tab
    ] {csvs/growing_factor_analysis_per_batch.tsv};
    \addlegendentry{2.0}
\addplot table [
    x expr=(\thisrow{batch number}+1)*1000000, y=1.2 size, col sep=tab
    ] {csvs/growing_factor_analysis_per_batch.tsv};
    \addlegendentry{1.2}

    \nextgroupplot[
  xlabel=Elements,ylabel=Scan time,
    ylabel style={align=center},
    ymin=0,
    ymax=.09,
    legend image post style={scale=0.4},
    yticklabel style={
        /pgf/number format/fixed,
        /pgf/number format/precision=2
},
    ]
\addplot [dashed] table [
    x expr=(\thisrow{batch number}+1)*1000000, y expr=\thisrow{2.0 sum time}*1000, col sep=tab
    ] {csvs/growing_factor_analysis_per_batch.tsv};
\addplot table [
    x expr=(\thisrow{batch number}+1)*1000000, y expr=\thisrow{1.2 sum time}*1000, col sep=tab
    ] {csvs/growing_factor_analysis_per_batch.tsv};
    \end{groupplot}
\end{tikzpicture}
\vspace{-.4cm}
\caption{Size (bytes) and scan time (ns) per element after each batch insertion in CPMAs with different growing factors.}
\label{fig:growing_factor_growth}
\end{figure}

To measure the effect of growing factor on the CPMA, we performed the following
experiment on CPMAs with growing factors
$1.1\times, 1.2\times, \ldots, 2.0\times$. We started with an empty CPMA and
added 1 billion elements in parallel batches of 1 million elements each (for a
total of 1,000 batches). After each batch, we measured the space usage of the
CPMA and performed a parallel scan over all of the elements. We also measure the
batch insertion throughput as a function of growing factor.

~\figref{growing_factor_overall} demonstrates that a smaller growing factor
results in smaller average space usage and therefore better average scan
performance because smaller sizes require less memory
traffic.~\figref{growing_factor_growth} shows that the growing factor bounds the
worst-case space usage of the CPMA: a CPMA with a higher growing factor has a
higher worst-case space usage. However, the exact space usage and scan
performance depend not only on the growing factor but also on the state of the
CPMA (i.e., how far it is from a growth).

Moreover, the relationship between insert time and growing factor is not as
straightforward as the relationship between size/scan and growing
factor. \figref{growing_factor_overall} shows that the CPMA with growing factor
$1.5\times$ achieves the best insertion throughput.  Small growing factors
(e.g., $1.1\times$) increase the number of array copies since the CPMA incurs
more growths, but also decrease the size of the array which improves other parts
of inserts such as the binary search and rebalances.  On the other hand, large
growing factors (e.g., $2\times$) have fewer array copies but longer searches,
which contribute to more expensive inserts.


\end{document}